\def\bs{\boldsymbol}
\def\I{\mathbbm{1}}
\newtheorem*{assumption*}{Assumption}
\newtheorem{proposition}{Proposition}
\providecommand{\customgenericname}{}
\newcommand{\newcustomtheorem}[2]{%
  \newenvironment{#1}[1]
  {%
   \renewcommand\customgenericname{#2}%
   \renewcommand\theinnercustomgeneric{##1}%
   \innercustomgeneric
  }
  {\endinnercustomgeneric}
}
\renewcommand{\algocf@captiontext}[2]{#1\algocf@typo. \AlCapFnt{}#2} 
\def\@algocf@capt@plain{top}
\renewcommand{\algocf@makecaption}[2]{%
  \addtolength{\hsize}{\algomargin}%
  \sbox\@tempboxa{\algocf@captiontext{#1}{#2}}%
  \ifdim\wd\@tempboxa >\hsize
  \hskip .5\algomargin%
  \parbox[t]{\hsize}{\algocf@captiontext{#1}{#2}}
  \else%
  \global\@minipagefalse%
  \hbox to\hsize{\box\@tempboxa}
  \fi%
  \addtolength{\hsize}{-\algomargin}%
}
\def\bs{\boldsymbol}
\def\I{\mathbbm{1}}
\begin{document}
\sectionfont{\bfseries\large\sffamily}%
\subsectionfont{\bfseries\sffamily\normalsize}%

\title{The role of randomization inference in unraveling individual treatment effects in early phase vaccine trials}

\author[1]{Zhe Chen}
\author[2]{Xinran Li}
\author[3]{Bo Zhang\thanks{Vaccine and Infectious Disease Division, Fred Hutchinson Cancer Center, WA 98109, United States; Email: {\tt bzhang3@fredhutch.org}} }

\affil[1]{Department of Statistics, University of Illinois Urbana-Champaign}
\affil[2]{Department of Statistics, University of Chicago}
\affil[3]{Vaccine and Infectious Disease Division, Fred Hutchinson Cancer Center}

\date{}

\maketitle

\noindent
\textsf{{\bf Abstract}: Randomization inference is a powerful tool in early phase vaccine trials when estimating the causal effect of a regimen against a placebo or another regimen. Randomization-based inference often focuses on testing either Fisher's sharp null hypothesis of no treatment effect for any unit or Neyman's weak null hypothesis of no sample average treatment effect. Many recent efforts have explored conducting exact randomization-based inference for other summaries of the treatment effect profile, for instance, quantiles of the treatment effect distribution function. In this article, we systematically review methods that conduct exact, randomization-based inference for quantiles of individual treatment effects (ITEs) and extend some results to a special case where na\"ive participants are expected not to exhibit responses to highly specific endpoints. These methods are suitable for completely randomized trials, stratified completely randomized trials, and a matched study comparing two non-randomized arms from possibly different trials. We evaluate the usefulness of these methods using synthetic data in simulation studies. Finally, we apply these methods to HIV Vaccine Trials Network Study 086 (HVTN 086) and HVTN 205 and showcase a wide range of application scenarios of the methods. \textsf{R} code that replicates all analyses in this article can be found in first author's GitHub page at \url{https://github.com/Zhe-Chen-1999/ITE-Inference}.}%

\vspace{0.3 cm}
\noindent
\textsf{{\bf Keywords}: causal inference; early phase clinical trials; immunogenicity;
effect quantile;
randomization inference; vaccine}

\section{Introduction}
\subsection{Early phase vaccine trials; vaccine-induced immune responses; heterogeneity}
\label{subsec: early phase vaccine trials}
One primary objective in study protocols of early phase clinical trials of experimental vaccines is to evaluate the vaccine-induced immunogenicity. Vaccine-induced immune responses are often heterogeneous among study participants. To illustrate, Figure \ref{fig: 086 heterogeneity} exhibits the observed serum IgG binding antibody multiplex assay (BAMA) responses to two antigens, Con 6 gp120/B and gp41, among study participants in a phase 1, multi-arm, placebo-controlled clinical trial conducted via the HIV Vaccine Trials Network (HVTN) \citep{ditse2020effect, huang2022baseline}. HVTN 086/SAAVI 103 (HVTN 086 henceforth) enrolled a total of $184$ participants into $4$ study arms; within each study arm, participants were randomized to a candidate vaccine regimen or placebo. It is transparent from Figure \ref{fig: 086 heterogeneity} that vaccine recipients' binding antibody responses ranged from ``potent and mostly homogeneous" (e.g., response to gp41 among recipients of regimens 1 and 4) to ``weak and heterogeneous" (e.g., response to Con 6 gp120/B among recipients of regimen 2). This within-participant heterogeneity in vaccine-induced immune responses has been well-documented in many vaccines, including those for Covid-19, influenza, dengue, and hepatitis B, \citep{huang2022baseline} and could at least partially explain the lack of efficacy in phase 2b/3, HIV-1 vaccine trials. 

\begin{figure}[ht]
    \centering
    \includegraphics[width = \textwidth]{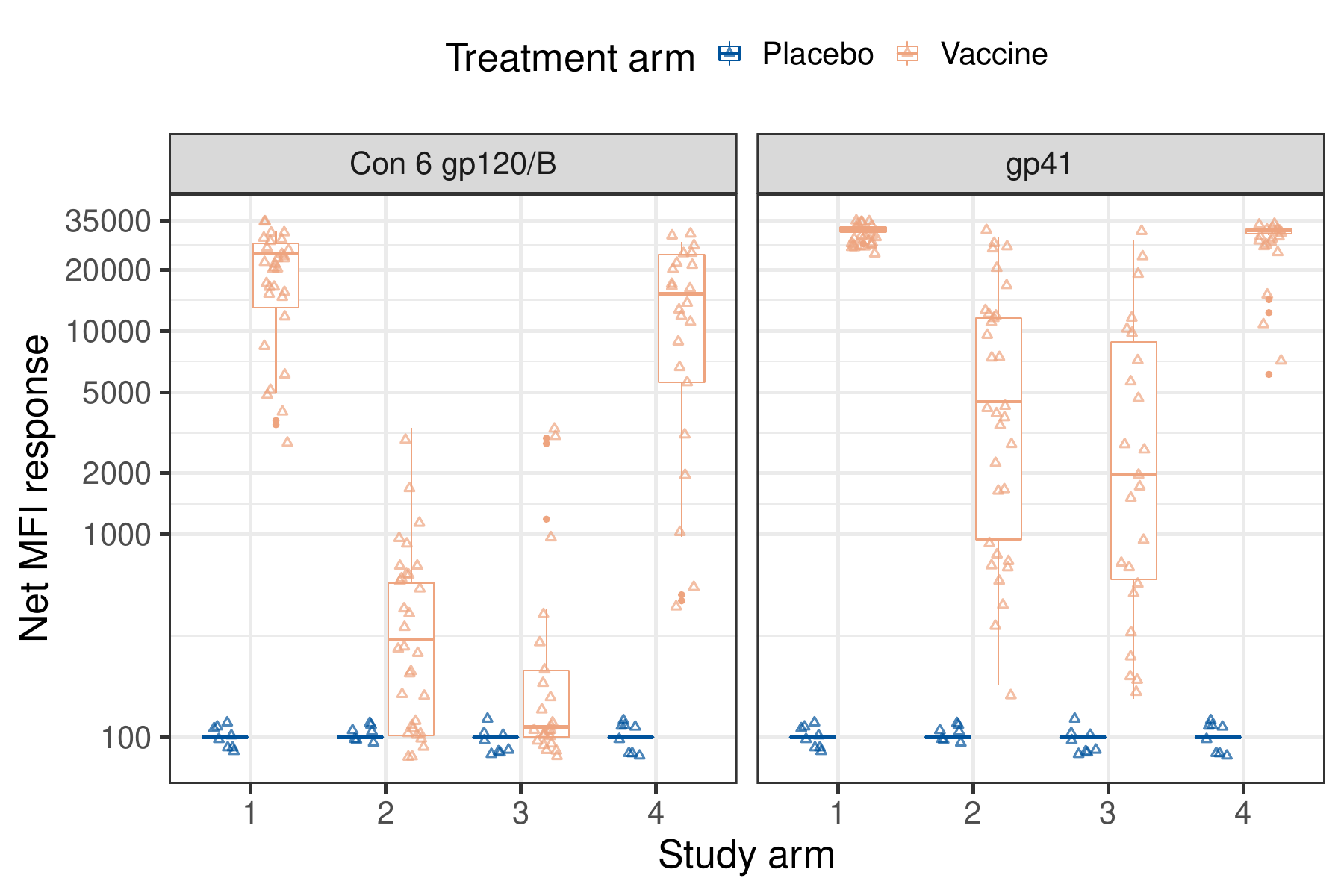}
    \caption{Observed serum IgG BAMA responses to Con6 gp120/B (left panel) and gp41 (right panel) among study participants in the HIV Vaccine Trials Network (HVTN) study 086. A total of $4$ study arms were plotted. Within each arm, participants were randomized to a vaccine regimen or placebo. A small perturbation was added to each observation to aid data visualization.}
    \label{fig: 086 heterogeneity}
\end{figure}

Researchers routinely characterize a vaccine regimen's induced immune responses (e.g., the BAMA responses in Figure \ref{fig: 086 heterogeneity}) by estimating and reporting its sample average treatment effect (SATE) against the placebo, which could be unbiasedly estimated in a randomized clinical trial, and assess and rank multiple regimens by comparing their estimated SATEs. When developing a challenging vaccine product like an HIV-1 vaccine, researchers have long realized that the response rates among study participants are often highly variable, and a significant proportion of participants could exhibit immune responses below the assay limit of detection (LOD) or lower limit of quantification (LLOQ). Hence, summarizing and comparing immune response profiles based on the mean difference alone could mask significant and perhaps meaningful heterogeneities. The current standard practice is to complement the estimated SATE and its 95\% confidence interval by further reporting (i) descriptive statistics and boxplots summarizing the spread of immune responses elicited by each regimen, (ii) the percentage of positive or high responders within each group, and (iii) the mean response among the subset of positive or high responders. Unfortunately, unlike the SATE which is a well-defined, albeit less than comprehensive, causal estimand, neither the descriptive statistics nor the mean difference in responses among positive or high responders constitutes a formal confidence statement about the ``treatment effect" of a regimen versus placebo or another regimen.

\subsection{Science table; estimands of interest; outline of the article}
\label{subsec: science table}

What is a well-defined causal estimand that captures treatment effect heterogeneity in early phase clinical trials? It is instructive to examine what \citet{rubin2005causal} refers to as a ``science table." Table \ref{tb: science table} summarizes the potential outcomes and unit-level treatment effects of $N_1 + N_0$ study participants in a clinical trial, where $N_1$ study participants are randomized to  Regimen $1$ and the other $N_0$ participants to Regimen $0$. In the science table, $Y_i(1)$ and $Y_i(0)$ denote study participant $i$'s potential immune responses of interest under two regimens, respectively, and only one of the two responses is observed depending on the actual regimen assigned to the participant. The contrast in the two potential outcomes, $\tau_i = Y_i(1) - Y_i(0)$, denotes the unit-level treatment effect \citep{rubin2005causal}. We will refer to $\tau_i$ as an \emph{individual treatment effect} (ITE) in this article \citep{caughey2021randomization, lipkovich2023overview}. The collection of ITEs, $\mathcal{T} = \{\tau_i,~i = 1, \dots, N_1 + N_0\}$, is the causal quantity of \emph{ultimate interest}, in the sense that any summary treatment effect can be derived from $\mathcal{T}$ (e.g., the sample average treatment effect $\overline{\tau}$). Let $\tau_{(i)}$ denote the $i$th largest treatment effect. The immunogenicity profile of a vaccine regimen (against placebo or a competing regimen), as revealed in an early phase clinical trial, is completely characterized by $\mathcal{T} = \{\tau_{(i)},~i = 1, \dots, N_1 + N_0\}$. Unfortunately, elements in $\mathcal{T}$ are almost never completely observed, so statistical inference is needed.

\begin{table}[ht]\centering
 \caption{Science table of $N_0 + N_1$ study participants. A total of $N_0$ are randomized to regimen 0 and the other $N_1$ are randomized to regimen $1$. Each participant is associated with two potential immune responses $Y_i(0)$ and $Y_i(1)$ corresponding to regimen $0$ and $1$, though only the one corresponding to the actual regimen assignment is observed (\textbf{boldface}). Each participant $i$ is associated with an individual treatment effect $\tau_i$.}
    \label{tb: science table}
    \resizebox{\textwidth}{!}{\begin{tabular}{cccccc}
    \hline
Participants &Regimen & $Y(1)$ & $Y(0)$ &Individual Treatment Effects\\ \hline
\\[-0.9em]
$1$  & Regimen 1 & $\mathbf{Y_{1}(1)}$ & $Y_{1}(0)$ & $\tau_1 = Y_1(1) - Y_1(0)$ \\
$2$  & Regimen 1 & $\mathbf{Y_{2}(1)}$ & $Y_{2}(0)$ & $\tau_2 = Y_2(1) - Y_2(0)$ \\
$\vdots$  &$\vdots$ &$\vdots$ &$\vdots$ &$\vdots$\\
$N_1$  & Regimen 1 & $\mathbf{Y_{N_1}(1)}$ & $Y_{N_1}(0)$ & $\tau_{N_1} = Y_{N_1}(1) - Y_{N_1}(0)$ \\
$N_1 + 1$  & Regimen 0 & $Y_{N_1 + 1}(1)$ & $\mathbf{Y_{N_1 + 1}(0)}$ &$\tau_{N_1 + 1} = Y_{N_1 + 1}(1) - Y_{N_1 + 1}(0)$  \\
$\vdots$ &$\vdots$ &$\vdots$ &$\vdots$ &$\vdots$\\
$N_1 + N_0$  & Regimen 0 & $Y_{N_1 + N_0}(1)$ & $\mathbf{Y_{N_1 + N_0}(0)}$ &$\tau_{N_1 + N_0} = Y_{N_1 + N_0}(1) - Y_{N_1 + N_0}(0)$ \\ \hline
\end{tabular}}
\end{table}

\subsection{Outline of the article}
\label{subsec: outline}

Our primary goal in this article is to provide a brief overview of two classical causal null hypotheses, Fisher's sharp null hypothesis and Neyman's weak null hypothesis, and introduce a novel class of null hypotheses regarding the quantiles of individual treatment effects. We then review some recently proposed methods that conduct exact, randomization-based inference for ITE quantiles. These methods are suitable for a range of practical scenarios, including completely randomized trials, block randomized trials, and a matched-pair design that compares two non-randomized vaccine arms or a new vaccine arm against historical controls. We argue that ITE quantiles, and more generally the distribution function of ITEs, offer a perspective that could complement usual estimands (e.g., sample average treatment effect) in early phase vaccine trials that aim to assess vaccine-induced immunogenicity. Finally, we present a comprehensive case study of the immunogenicity data derived from HIV Vaccine Trials Network (HVTN) Study 086 and explore how different methods may facilitate decision-making and improve the evaluation of vaccine regimens.

\section{Framework, notation, and different null hypotheses}
\subsection{Fisher's sharp null hypothesis}
\label{subsec: review Fisher sharp null}
We consider a two-arm randomized trial on $N$ study participants under Neyman-Rubin's potential outcomes framework \citep{neyman1923application,rubin1974estimating}. Let $Y_i(1)$ and $Y_i(0)$ denote participant $i$'s potential outcomes under \textsf{Regimen 1} and \textsf{Regimen 0} and $\tau_i = Y_i(1) - Y_i(0)$ participant $i$'s ITE.
We collect the set of $N$ potential outcomes under \textsf{Regimen 1} in $\mathbf{Y}(1) = (Y_1(1), Y_i(1), \dots, Y_N(1))^\intercal$, those under \textsf{Regimen 0} in $\mathbf{Y}(0) = (Y_1(0), Y_i(0), \dots, Y_N(0))^\intercal$, 
and the set of ITEs in $\boldsymbol{\tau} = (\tau_1, \tau_2, \dots, \tau_N)^\intercal$. 
Let $Z_i$ denote the treatment assignment to participant $i$ such that $Z_i = 1$ if participant $i$ is assigned \textsf{Regimen 1} and $0$ otherwise. The set of treatment assignments is collected in $\mathbf{Z} = (Z_1, Z_2, \dots, Z_N)^\intercal$. For each study participant $i$, the observed outcome $Y_i$ satisfies $Y_i = Z_i \cdot Y_i(1) + (1 - Z_i) \cdot Y_i(0)$. We collect $N$ observed outcomes in the vector $\mathbf{Y} = (Y_1, Y_2, \dots, Y_N)^\intercal$. In randomization inference, potential outcomes of study participants are viewed as fixed quantities and researchers rely \emph{solely} on the treatment assignment mechanism to draw valid causal conclusions. In other words, randomization forms what Sir Ronald Fisher referred to as the ``reasoned basis" for causal inference \citep{Fisher1935design}. In a completely randomized experiment (CRE), $N_1$ study participants are randomly assigned to \textsf{Regimen 1} and the other $N_0 = N - N_1$ to \textsf{Regimen 0}. In his seminal work \emph{Design of Experiments}, \cite{Fisher1935design} 
advocated testing the following \emph{sharp} null hypothesis:
\begin{equation*}
    H_{0, \text{sharp}}: \mathbf{Y}(1) = \mathbf{Y}(0), 
\end{equation*}
which states that the treatment has no effect whatsoever on any study participant, or equivalently $\boldsymbol{\tau} = 0$. 
Under this null hypothesis, the missing potential outcome of each study participant can then be imputed because $Y_i(1) = Y_i(0) = Y_i$, for $i = 1, \dots, N$, and any test statistic has its null distribution completely specified, leading to an \emph{exact} $p$-value upon comparing the observed value of the chosen test statistic to this null distribution. The Fisher randomization test (FRT) scheme works for a generic sharp null hypothesis of the following form:
\begin{equation*}
H_{\boldsymbol{\delta}}: \boldsymbol{\tau}=\boldsymbol{\delta},
\end{equation*}
where $\boldsymbol{\delta}=\left(\delta_1, \delta_2, \ldots, \delta_N\right)^{\intercal} \in \mathbb{R}^N$ is a prespecified  vector of constants.
 Let $t(\bs{z}, \bs{y})$ denote a generic function of the assignment vector $\bs{z} \in \{0,1\}^N$ and outcome vector $\bs{y}\in \mathbb{R}^N$. 
 Following \citet{rosenbaum2002observational}, we consider test statistics of the form $t\left(\boldsymbol{Z}, \boldsymbol{Y}_{\bs{Z}, \bs{\delta}}(0)\right)$, where $\bs{Z}$ is the vector of observed treatment assignment and $\bs{Y}_{\bs{Z}, \bs{\delta}}(0)$ is the vector of potential outcomes under \textsf{Regimen 0}. Given the observed data $\bs{Y}$ and under $H_{\boldsymbol{\delta}}$, we have $\boldsymbol{Y}_{\boldsymbol{Z}, \boldsymbol{\delta}}(0)=\boldsymbol{Y}-\boldsymbol{Z} \circ \boldsymbol{\delta}$, where $\circ$ represents element-wise multiplication.
In a CRE design, the null distribution of the test statistic has the following tail probability:
\begin{align}\label{eq:tail_prob}
    G_{\bs{Z}, \bs{\delta}}(c) & \equiv \Pr\{ t(\bs{A}, \bs{Y}_{\bs{Z}, \bs{\delta}}(0)) \ge c \} = \binom{N}{N_1}^{-1} \cdot \sum_{\bs{a}\in \{0,1\}^N: \sum_{i=1}^N a_i = N_1} \I\left\{ t(\bs{a}, \bs{Y}_{\bs{Z}, \bs{\delta}}(0)) \ge c\right\},
\end{align}
where $\bs{A}$ denotes a random treatment assignment vector under the CRE design, and the corresponding randomization-based, exact $p$-value is obtained by evaluating the tail probability in \eqref{eq:tail_prob} at the observed value of the test statistic: \begin{align}\label{eq:pval}
p_{\boldsymbol{Z}, \boldsymbol{\delta}} \equiv G_{\boldsymbol{Z}, \boldsymbol{\delta}}\left\{t\left(\boldsymbol{Z}, \boldsymbol{Y}_{\boldsymbol{Z}, \boldsymbol{\delta}}(0)\right)\right\}.
\end{align}

The sharp null $H_{0, \text{sharp}}$ specifies the entire ITE vector $\bs{\tau}$ and often equates it to a constant (e.g., $\bs{\tau} = c\bs{1}$). In some circumstances, treatment effect heterogeneity is likely to exist although its details (e.g., how the treatment effect varies across subgroups) may be unknown. In these cases, the sharp null hypothesis of no effect whatsoever, i.e., $H_{0, \text{sharp}}: \mathbf{Y}(1) = \mathbf{Y}(0)$, is still relevant and a meaningful first step of data analysis \citep{imbens2015causal}; however, the null hypothesis of a constant additive effect may not be of primary interest.

\subsection{Neyman's weak null hypothesis}
\label{subsec: Neyman weak null}
Testing a weak null hypothesis emerges as an alternative to testing a sharp null hypothesis. A weak null hypothesis only hypothesizes one aspect or property of the collection of ITEs (e.g., the \textit{average} treatment effect) without putting any restriction on each component ITE. Much of the literature on testing a weak null hypothesis, dating back to \citet{neyman1935statistical}, has focused almost exclusively on testing the following weak null hypothesis of no sample average treatment effect:
\begin{equation*}
    H_{0, \text{weak}}: \overline{\tau} = \frac{1}{N}\sum^N_{i = 1} \tau_i = 0.
\end{equation*}
\noindent Unlike the Fisher Randomization Test (FRT) that is \emph{exact}, randomization-based tests for SATE in general require large-sample approximation, although some of them also enjoy finite-sample validity for a certain sharp null hypothesis. 
For instance, \cite{ding2018randomization} showed that, under a CRE design, the FRT based on a studentized $t$-statistic for the sharp null hypothesis $H_{0, \text{sharp}}$ is also asymptotically valid for the weak null hypothesis $H_{0, \text{weak}}$; 
see also \cite{cohen2022gaussian} and \cite{wu2021randomization} for extensions to more general test statistics and experimental designs.

As discussed in Section \ref{subsec: early phase vaccine trials}, the SATE is one important, albeit not comprehensive, assessment of a regimen's treatment effect profile. Moreover, when the outcomes have heavy tails, the SATE could be sensitive to the outliers and the finite population asymptotic approximation tends to work poorly in these cases \citep{caughey2021randomization}. These aspects are particularly relevant in pre-clinical studies (e.g., nonhuman primates studies) and early phase clinical trials, where the sample size could be as small as $10$ to $20$ per arm and treatment effect heterogeneity is often expected.  

\subsection{Beyond the SATE: quantiles and proportions}
\label{subsec: quantile and proportion}
An exclusive focus on the SATE could arrive at unsatisfactory conclusions in face of large treatment effect heterogeneity.
For example, a treatment that is harmful to a majority of participants could still have a positive treatment effect on average, due to possibly some large ITEs on a small proportion of the cohort. A most comprehensive summary of ITE profile is its distribution function. Let $F(c) \equiv N^{-1} \sum_{i=1}^N \I(\tau_i \le c)$, for $c\in \mathbb{R}$, denote the distribution function of ITEs and $F^{-1}(\beta) \equiv \inf\{c: F(c) \ge \beta\}$, for $\beta \in (0,1]$, denote the corresponding quantile function. We focus on the $N$ study participants so that the quantile function can take at most $N$ values, i.e., the sorted ITEs $\tau_{(1)} \le \tau_{(2)} \le \ldots \le \tau_{(N)}$. 
More precisely, we have
$F^{-1}(\beta) = \tau_{(k)}$, with $k = \lceil N\beta \rceil$ denoting the ceiling of $N\beta$, for $\beta \in (0,1
]$. In addition, 
the distribution function can be equivalently written as $F(c) = 1 - N(c)/N$, for $c\in \mathbb{R}$, 
where $N(c) \equiv \sum_{i=1}^N \I(\tau_i > c)$ denotes the number of participants with ITEs exceeding a threshold $c$. 
We consider the following null hypothesis for any $0\le k \le N$ and $c\in \mathbb{R}$:
\begin{align}\label{eq:Hkc}
    H_{k, c}: \tau_{(k)} \le c 
    \ \Longleftrightarrow \ 
    N(c) \le N-k. 
\end{align}
For descriptive simplicity, we define $\tau_{(0)} = -\infty$. 
Here we focus on one-sided testing with alternatives favoring large treatment effects; by inverting the tests, we may then obtain lower confidence limits for ITE quantiles $F^{-1}(\beta)$'s (or equivalently $\tau_{(k)}$'s) and proportions of participants with ITEs exceeding any thresholds $1-F(c)$'s (or equivalently $N(c)$'s). 
To obtain one-sided tests with alternatives favoring small treatment effects, one may use the same procedure but with observed outcomes' signs flipped or the treatment/control status switched. Two-sided tests can be constructed by combining two one-sided tests using, say, the Bonferroni correction. 

\section{A review of randomization inference for ITE profiles}
\label{sec: review of methods}
\subsection{Completely randomized experiments}
\label{subsec: CRE}
In a recent article, \citet{caughey2021randomization} extend the FRT to testing the null hypothesis $H_{k,c}$ in \eqref{eq:Hkc}. Because $H_{k,c}$ is a composite null hypothesis and permits infinitely many imputation schemes, FRT is not directly applicable. Nevertheless, a valid $p$-value for testing $H_{k,c}$ can be obtained by maximizing the randomization $p$-value $p_{\boldsymbol{Z}, \boldsymbol{\delta}}$ in \eqref{eq:pval} over $\boldsymbol{\delta} \in \mathcal{H}_{k, c}$, where $\mathcal{H}_{k, c}$ denotes
 the set of vectors whose elements of rank $k$ are bounded by $c$,
 i.e., $\mathcal{H}_{k, c} \equiv \left\{ \boldsymbol{\delta} \in \mathbb{R}^N: \delta_{(k)}\leq c \right\} \subset \mathbb{R}^N$.
However, optimizing $\sup _{\boldsymbol{\delta} \in \mathcal{H}_{k, c}} p_{\boldsymbol{Z}, \boldsymbol{\delta}}$ is computationally challenging and can even be NP hard.
To address this challenge, \citet{caughey2021randomization} propose to use the class of rank score statistics of the following form:
\begin{equation}\label{def: rank score stat}
    t(\boldsymbol{z}, \boldsymbol{y})=\sum_{i=1}^{N} z_{i} \phi\left\{\mathrm{r}_{i}(\boldsymbol{y})\right\},
\end{equation}
where $\phi\{\cdot\}$ is a monotone increasing function, and $\mathrm{r}_i(\bs{y})$ denotes the rank of the $i$th coordinate of $\bs{y}$ using index ordering to break ties, 
assuming that the ordering has been randomly permuted before the analysis. 

Under a CRE design, the rank score statistic $t(\cdot, \cdot)$ defined in \eqref{def: rank score stat} is distribution free, in
the sense that for any $\boldsymbol{y}, \boldsymbol{y'} \in \mathbb{R}^{N}$, $t(\boldsymbol{Z},\boldsymbol{y})$ and $t(\boldsymbol{Z},\boldsymbol{y'})$ follow the same distribution.
 Because of this distribution free property, the imputed randomization distribution in \eqref{eq:tail_prob} reduces to a distribution that does not depend on the observed treatment assignment $\boldsymbol{Z}$ or the hypothesized treatment effect $\boldsymbol{\delta}$, i.e.,
\begin{equation}\label{def: G}
G_{\boldsymbol{Z}, \boldsymbol{\delta}}(c) \equiv \operatorname{Pr}\left\{t\left(\boldsymbol{A}, \boldsymbol{Y}_{\boldsymbol{Z}, \boldsymbol{\delta}}(0)\right) \geq c\right\}=\operatorname{Pr}\{t(\boldsymbol{A}, \boldsymbol{y}) \geq c\} \equiv G(c),
\end{equation}
where $\boldsymbol{y}$ can be any constant vector in $\mathbb{R}^N$. 
Consequently, the valid $p$-value $\sup _{\boldsymbol{\delta} \in \mathcal{H}_{k, c}} p_{\boldsymbol{Z}, \boldsymbol{\delta}}$ for testing $H_{k,c}$ in \eqref{eq:Hkc} simplifies to
\begin{equation}
\label{eq: valid p-val for experimental units}
p_{k,c}^{\textup{R}} \equiv \sup _{\boldsymbol{\delta} \in \mathcal{H}_{k, c}} p_{\boldsymbol{Z}, \boldsymbol{\delta}} = \sup _{\boldsymbol{\delta} \in \mathcal{H}_{k, c}} G\{t(\boldsymbol{Z}, \boldsymbol{Y}-\boldsymbol{Z} \circ \boldsymbol{\delta})\}=G\left\{\inf _{\boldsymbol{\delta} \in \mathcal{H}_{k, c}} t(\boldsymbol{Z}, \boldsymbol{Y}-\boldsymbol{Z} \circ \boldsymbol{\delta})\right\},
\end{equation}
where the last equality holds because $G$ is monotone decreasing and $t(\boldsymbol{Z}, \boldsymbol{Y}-\boldsymbol{Z} \circ \boldsymbol{\delta})$ achieves its infimum over $\boldsymbol{\delta} \in \mathcal{H}_{k, c}$. Equation \eqref{eq: valid p-val for experimental units} suggests that, to obtain a valid $p$-value for testing $H_{k,c}$ based on a distribution free test statistic, it suffices to minimize the value of the test statistic $t\left(\boldsymbol{Z}, \boldsymbol{Y}_{\boldsymbol{Z}, \boldsymbol{\delta}}(0)\right)$ over $\boldsymbol{\delta} \in \mathcal{H}_{k, c}$.
 When using the rank score statistics in \eqref{def: rank score stat}, the infimum is achieved when the ranks of $Y_i(0)$'s of treated participants are minimized, or equivalently, treated participants' ITEs are maximized subject to $H_{k,c}$. 
 Because $Z_i = 0$ for those in the control group, their $Y_i(0)$'s do not directly contribute to the value of the test statistic.
 \citet{caughey2021randomization} show that the worst-case $p$-value corresponds to assigning arbitrarily large ITEs to the $N-k$ treated participants with the largest observed outcomes and $c$ to the remaining participants.
 Moreover, the inference is simultaneously valid in the sense that there is no need to conduct multiple testing correction when jointly testing many quantiles or the entire distribution function of ITEs.

There are two main sources of conservativeness in \citeauthor{caughey2021randomization}'s \citeyearpar{caughey2021randomization} approach.
First, the worst-case solution depends primarily on the potential outcomes under control for treated participants, that is, the term $\boldsymbol{Y}-\boldsymbol{Z} \circ \boldsymbol{\delta}$ for those with $Z_i = 1$ in the expression \eqref{eq: valid p-val for experimental units}. Second, the worst-case solution corresponds to assigning largest ITEs to the treated participants; this is unlikely by randomization because ITE can be understood as a pretreatment variable in a broad sense, and randomization tends to balance the distribution of pretreatment variables. More recently, \citet{CL} propose two enhanced methods that each tackle one limitation and achieve improved statistical power.

Their first method better leverages the information contained in the control participants by (A1) conducting level-$\alpha$ simultaneous inference for ITEs among treated participants, (A2) flipping treated and control participants and repeating the level-$\alpha$ simultaneous inference for control participants, and (A3) combining the intervals for all participants by ordering the one-sided intervals according to their lower confidence limits. The resulting ordered intervals can be shown to be simultaneously valid, level-$2\alpha$ confidence intervals for all $N$ ITEs. In this procedure, the choice of the test statistic could be different in Step (A1) and Step (A2). We will explore the choice of test statistics in some practical settings in the simulation study.
Their second method targets the second limitation in \citet{caughey2021randomization}. 
Instead of presuming that the largest $N-k$ ITEs are all among the treated participants, they view the number of participants with the largest $N-k$ effects in the treatment arm as a nuisance parameter and 
use \citeauthor{Berger1994}'s \citeyearpar{Berger1994} approach to control for the randomness of this nuisance. Their second method can be summarized as follows: (B1) apply the \citeauthor{Berger1994}'s \citeyearpar{Berger1994} correction to derive simultaneous confidence intervals for all participants; (B2) flip the role of treated and control participants and repeat Step (B1); and (B3) combine the intervals derived from Step (B1) and Step (B2) using the Bonferroni method. \citet{CL} show via simulations that both approaches deliver more powerful statistical inference compared to the original method in \citet{caughey2021randomization} while remaining exact in finite sample.


\subsection{Stratified randomized experiments}
\label{subsec: stratified experiment}
\citet{10.1093/biomet/asad030} extend the approach in \citet{caughey2021randomization} to stratified completely randomized experiments (SCRE). They consider testing the same null hypothesis $H_{k, c}$ in a study design with $S$ strata. Each stratum consists of $n_s$ study participants such that $N=\sum_{s=1}^S n_s$. \citet{10.1093/biomet/asad030} consider using the following stratified rank sum statistic: 
\begin{equation}
    \label{eq: stratified rank sum}
    t_{\text{str}}(\bs{z}, \bs{y})=\sum_{s=1}^S t_s\left(\bs{z_s}, \bs{y_s}\right)=\sum_{s=1}^S \sum_{i=1}^{n_s} z_{s i} \phi_s\left\{\operatorname{r}_i\left(\bs{y_s}\right)\right\},
\end{equation}
where $\phi_s\{\cdot\}$ again denotes some monotone increasing rank transformation for stratum $S = s$, $\bs{z}_s$ and $\bs{y}_s$ denote the subvectors of $\bs{z}$ and $\bs{y}$ corresponding to stratum $S = s$, 
and $\operatorname{r}_i(\bs{y_s})$ denotes the rank of the $i$th coordinate of $\bs{y_s}$ among all coordinates of $\bs{y_s}$. Again, the stratified rank sum statistic enjoys the distribution free property in a stratified completely randomized experiment, so that a valid $p$-value for testing $H_{k, c}$ has an equivalent form as in \eqref{eq: valid p-val for experimental units}: 
the $p$-value $p_{k,c}^{\textup{R}}$ defined as in \eqref{eq: valid p-val for experimental units}, but with $t(\cdot, \cdot)$ replaced by the stratified rank sum statistic $t_{\text{str}}(\cdot, \cdot)$ and $G(\cdot)$ being the tail probability of $t_{\text{str}}(\bs{Z}, \bs{y})$ for any $\bs{y}\in \mathbb{R}^N$, 
is a valid $p$-value for testing $H_{k,c}$. \citet{10.1093/biomet/asad030} then demonstrate that minimizing the test statistic over all possible ITEs compatible with the null hypothesis $H_{k,c}$ can be transformed into a multiple-choice knapsack problem, which can be solved efficiently and exactly using dynamic programming, or in a slightly conservative manner using a greedy algorithm. Implementation of the methods can be found in the \textsf{R} package \textsf{QIoT} \citep{10.1093/biomet/asad030}.
The first enhanced method in Section \ref{subsec: CRE} can also be extended to the SCRE \citep{CL}.

\subsection{Integrated analysis of non-randomized arms}
\label{subsec: integrated matched analysis}
In many circumstances, researchers may be interested in conducting a pooled analysis of data derived from multiple clinical trials. This can happen in two circumstances. First, researchers may be interested in comparing immunogenicity of two vaccine regimens, one from a current trial and the other from a historical trial. Second, researchers may be interested in comparing a vaccine regimen to historical controls. Because of a lack of randomization, a na\"ive comparison of outcomes from non-randomized arms assuming randomization could lead to a bias in estimating the treatment effects. In these studies, it is essential to adjust for baseline covariates that could predict immunogenicity; for instance, \citet{huang2022baseline} report that host baseline characteristics could predict a high-level binding antibody response with a cross-validated area under the receiver operating characteristics curve (AUROC) equal to $0.72$ (95\% CI: [0.68 0.76]). Many methods can be used to perform covariance adjustment; in early phase trials with small samples, one reasonable strategy is to conduct a matched cohort study \citep{rosenbaum2002observational,rosenbaum2010design}. One popular downstream data analysis strategy in a matched cohort study is to conduct randomization inference. Because of the matched-pair or matched-set structure, one needs to conduct stratified randomization inference discussed in Section \ref{subsec: stratified experiment}. We will illustrate the proposed workflow, from statistical matching to stratified randomization inference for the ITE profile, in a case study in Section \ref{sec: case study}.

\subsection{Relaxing the randomization assumption}
\label{subsec: relax RI}
In a matched analysis of immunogenicity data derived from non-randomized arms, the key assumption is a version of the ignorability assumption, which effectively says that within the strata defined by observed covariates being matched on, selection into a particular trial or study arm is randomized \citep{stuart2011use,dahabreh2019generalizing}. In these analyses, it is essential to examine the consequences of deviating from the randomization assumption, as \citet[Chapter 6]{rosenbaum2017observation} put it: ``the absence of an obvious reason to think that two groups are different falls well short of a compelling reason to think they are the same." In early phase vaccine trials, it is conceivable that healthier study participants with potentially stronger immune responses may be preferentially enrolled a certain trial or study arm, for instance, because of the difference in the vaccine dosing schedules (bolus vs. fractional dosing), and adjusting for observed covariates may not be sufficient in removing the selection bias.

\citet{10.1093/biomet/asad030} study a relaxed randomization inference scheme, where in lieu of assuming randomization within each stratum, the treatment assignment probability is allowed to deviate from randomization under a model that controls the maximum level of deviation \citep{rosenbaum1987sensitivity, rosenbaum2002observational}:
\begin{equation}
\label{eq: Gamma model}
    \frac{1}{\Gamma} \leq \frac{\pi_{sj}/(1-\pi_{sj})}{\pi_{sj'}/(1-\pi_{sj'})} \leq {\Gamma}, \quad 1 \leq j, j' \leq n_s, ~1 \leq s \leq S, 
\end{equation}
where $\pi_{sj}$ and $\pi_{sj'}$ denote the treatment assignment probabilities of participant $j$ and $j'$ in stratum $S = s$, respectively, and the sensitivity parameter $\Gamma \geq 1$ specifies the maximum odds ratio across all strata. Inferring the treatment effect under a biased randomization scheme is referred to as a sensitivity analysis in the analysis of non-randomized or observational data, and the goal is to investigate the maximum degree of deviation from the randomization assumption when a certain null hypothesis can no longer be rejected. Such sensitivity analysis methods have been developed for Fisher's sharp null hypothesis \citep{rosenbaum1987sensitivity, rosenbaum2002observational} and Neyman's weak null hypothesis of no sample average treatment effect \citep{fogarty2020studentized}; \citet{10.1093/biomet/asad030} generalize the method to testing the null hypothesis $H_{k, c}$ under a stratified CRE design.

\section{Placebo-controlled trials with highly specific endpoints}
\label{sec: treatment against placebo}

In a placebo-controlled vaccine trial, participants' potential outcomes under the placebo are often known \emph{a priori}, and these controls are nonetheless included in the study primarily for blinding purposes (e.g., preventing treatment arm information from being revealed to lab technicians). For instance, in many early phase HIV vaccine trials, the endpoints of interest are vaccine-induced, antigen-specific immune responses, and healthy and na\"ive study participants receiving the placebo are not expected to exhibit any of these highly specific immune responses to the antigens. Hence, we often have auxiliary information $Y_i(0) = \textsf{LOD}$ for all $i = 1, \cdots, N$, where $\textsf{LOD}$ represents an assay-specific limit of detection. In this stylistic case, we immediately know the ITEs of $N_1$ treated participants, based on which we may infer the entire ITE distribution. In this section, we formally derive a level-$\alpha$ confidence interval for a single ITE quantile $\tau_{(k)}$, prove the equivalence between our result and earlier results by \citet{sedransk1978confidence} in the context of simple random sampling, and derive simultaneously valid confidence intervals for multiple ITE quantiles or the entire ITE distribution.

Without loss of generality, 
we assume $\textsf{LOD} = 0$ so $Y_i(0) = 0$ for all $i$'s. Consider the null hypothesis $H_{k,c}$ in \eqref{eq:Hkc} for any $1\le k \le N$ and $c\in \mathbb{R}$. 
Under the CRE design, treated participants are a simple random sample of size $N_1$ from a total of $N$ participants.
Therefore, 
$n(c) \equiv
\sum_{i=1}^N Z_i \I(Y_i > c)
= 
\sum_{i=1}^N Z_i \I(\tau_i > c) 
$
follows 
a Hypergeometric distribution with parameters $(N, N(c), N_1)$.
The Hypergeometric distribution with parameters $(N, n, N_1)$ becomes stochastically larger as $n$ increases; that is, if $X$ follows a Hypergeometric distribution with parameters $(N, n, N_1)$,  $Y$ follows a Hypergeometric distribution with parameters $(N, n', N_1)$ such that $n'\geq n$, then $\mathbb{P}\left(X \leq t\right) \geq \mathbb{P}\left(Y \leq t\right)$ for every $t\in \mathbb{R}$. Together, these facts imply a finite-sample valid $p$-value for testing $H_{k,c}$, as summarized in the following proposition.  

\begin{proposition}\label{prop:hyper_pval}
For any $1\le k \le N$ and $c\in \mathbb{R}$, 
$p_{k, c}^{\textup{H}} \equiv G_{\textup{H}}(n(c); N, N-k, N_1)$ 
is a valid $p$-value for testing the null hypothesis $H_{k,c}$ in \eqref{eq:Hkc}, where $G_{\textup{H}}(x; N, n, N_1) \equiv \Pr(X\ge x)$ denotes the tail probability of a Hypergeometric random variable $X$ with parameters $(N, n, N_1)$. 
Specifically, under $H_{k,c}$, 
$\Pr(p_{k, c}^{\textup{H}} \le \alpha) \le \alpha$ for any $\alpha\in (0,1)$. 
\end{proposition}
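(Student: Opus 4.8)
The plan is to deduce the proposition from the two facts already assembled in the paragraph preceding it: the exact Hypergeometric null distribution of the count statistic $n(c)$, and the stochastic monotonicity of the Hypergeometric family in its ``number of successes'' parameter. First I would fix $k$ and $c$. Since $Y_i(0)=0$ for every $i$ we have $\tau_i=Y_i$, so for a treated unit $\I(Y_i>c)=\I(\tau_i>c)$; under the CRE the treated set is a simple random sample of size $N_1$ from the $N$ units, hence $n(c)=\sum_{i=1}^N Z_i\I(\tau_i>c)$ follows $\mathrm{Hypergeometric}(N,N(c),N_1)$ with $N(c)=\sum_{i=1}^N\I(\tau_i>c)$, and by \eqref{eq:Hkc} the null $H_{k,c}$ says exactly that $N(c)\le N-k$.

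Second, I would isolate the elementary ``super-uniformity'' lemma that underlies every exact tail-probability $p$-value: if $U$ and $V$ are random variables with $\Pr(U\ge t)\le\Pr(V\ge t)$ for all $t$, and $\bar F_V(x)\equiv\Pr(V\ge x)$ denotes the survival function of $V$, then $\Pr\{\bar F_V(U)\le\alpha\}\le\alpha$ for every $\alpha\in(0,1)$. The proof is the standard one: $\bar F_V$ is nonincreasing, so $\{\bar F_V(U)\le\alpha\}=\{U\ge q_\alpha\}$ for $q_\alpha\equiv\min\{x:\bar F_V(x)\le\alpha\}$, and therefore $\Pr(U\ge q_\alpha)\le\Pr(V\ge q_\alpha)=\bar F_V(q_\alpha)\le\alpha$.

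To finish, I would apply the lemma with $U=n(c)\sim\mathrm{Hypergeometric}(N,N(c),N_1)$ and $V\sim\mathrm{Hypergeometric}(N,N-k,N_1)$, so that $\bar F_V(\cdot)=G_{\textup{H}}(\cdot;N,N-k,N_1)$ and $\bar F_V(U)=p_{k,c}^{\textup{H}}$. The condition $\Pr(U\ge t)\le\Pr(V\ge t)$ for all $t$ is precisely the stochastic monotonicity stated just before the proposition, invoked with $n=N(c)\le N-k=n'$ under $H_{k,c}$; this inequality is all that is used, so the composite nature of $H_{k,c}$ — it allows any $N(c)\le N-k$, the least favorable configuration being $N(c)=N-k$ — is absorbed automatically, and a strict inequality $N(c)<N-k$ only makes $p_{k,c}^{\textup{H}}$ larger. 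Combining the two steps gives $\Pr(p_{k,c}^{\textup{H}}\le\alpha)\le\alpha$.

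The only point that warrants genuine care — the ``hard part'', modest as it is — is the discreteness of the Hypergeometric law inside the super-uniformity lemma: because $\bar F_V$ has jumps, one must take $q_\alpha$ to be the smallest point in the support of $V$ with $\bar F_V(q_\alpha)\le\alpha$ and verify that the events $\{\bar F_V(x)\le\alpha\}$ and $\{x\ge q_\alpha\}$ coincide on that support, which follows because $\bar F_V$ is strictly decreasing on the support of $V$. The remaining verifications — the identity for $n(c)$ when $Y_i(0)=0$, and the conservativeness when $N(c)<N-k$ — are routine.
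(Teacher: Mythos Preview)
Your proposal is correct and follows essentially the same approach as the paper: the paper assembles the two ingredients (the Hypergeometric law of $n(c)$ under the CRE and stochastic monotonicity of the Hypergeometric family in its success parameter) in the paragraph preceding the proposition, and the proof in the supplement combines them via the same tail-probability super-uniformity argument you spell out. Your extra care about discreteness in the super-uniformity step is sound, though the strict-decrease observation is more than is needed---mere monotonicity of $\bar F_V$ on the integer support already makes $\{\bar F_V(j)\le\alpha\}$ an upper set.
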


\begin{proof}
    All proofs in the article can be found in Supplemental Material A.
\end{proof}

From Proposition \ref{prop:hyper_pval}, 
we can then conduct Lehmann-style test inversion to construct confidence intervals for $\tau_{(k)}$ and $n(c)$, 
for any $1\le k \le N$ and $c\in \mathbb{R}$. 
Moreover, 
due to the monotonicity of the $p$-value $p_{k, c}^{\textup{H}}$ in $k$ and $c$, 
the resulting confidence sets are intervals and have simpler forms that facilitate their computation. Proposition \ref{prop:hyper_ci} summarizes these results. For descriptive convenience, we let $y_{(1)} \le y_{(2)} \le \ldots \le y_{(N_1)}$ denote the sorted observed outcomes for treated participants and further define $y_{(0)} = -\infty$. 

\begin{proposition}\label{prop:hyper_ci}
\begin{itemize}
    \item[(a)] 
    $p_{k, c}^{\textup{H}}$ is monotone increasing in $c$ and decreasing in $k$. 

    \item[(b)] 
    For any $1\le k \le N$ and $\alpha\in (0,1)$, 
    $\{c: p_{k, c}^{\textup{H}} > \alpha, c \in \mathbb{R} \}$ is a $1-\alpha$ confidence interval for $\tau_{(k)}$, 
    and it can be equivalently written as $[y_{(k(\alpha))}, \infty)$ 
    with 
  $k(\alpha) \equiv  N_1 - Q_{\textup{H}}(1-\alpha; N, N-k, N_1)$, where $Q_{\textup{H}}(\theta; N, n, N_1)$ denotes the $\theta$-th quantile function of a Hypergeometric random variable $X$ with parameters $(N, n, N_1)$.

    \item[(c)] For any $c\in \mathbb{R}$ and $\alpha\in (0,1)$, 
    $\{N-k: p_{k, c}^{\textup{H}} > \alpha,  0 \le k \le N \}$ is a $1-\alpha$ confidence set for $N(c)$, 
    and it can be equivalently written as 
    $\{n_{c, \alpha}, n_{c, \alpha}+1, \ldots, N\}$ with $n_{c, \alpha}=N - \max \{k: G_{\textup{H}}(n(c) ; N, N-k, N_1) > \alpha, 0 \le k \le N\}$. 
\end{itemize}
\end{proposition}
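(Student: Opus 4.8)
The plan is to prove part (a) directly from two monotonicity facts -- that $n(c)=\sum_{i=1}^N Z_i\I(Y_i>c)$ is decreasing in $c$ and that the Hypergeometric upper tail is decreasing in its argument and increasing in the "number of successes" parameter -- and then to derive parts (b) and (c) by Lehmann-style inversion of the valid $p$-value $p_{k,c}^{\textup H}$ from Proposition \ref{prop:hyper_pval}, using part (a) to show that the resulting confidence sets are one-sided intervals and to identify their endpoints explicitly.

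For part (a), recall $p_{k,c}^{\textup H}=G_{\textup H}(n(c);N,N-k,N_1)$. As $c$ increases, $n(c)$ weakly decreases; since $G_{\textup H}(x;N,n,N_1)=\Pr(X\ge x)$ is weakly decreasing in $x$, the composition $p_{k,c}^{\textup H}$ is weakly increasing in $c$. As $k$ increases, $n=N-k$ decreases, and by the stochastic monotonicity of the Hypergeometric family recorded just before Proposition \ref{prop:hyper_pval} (a larger $n$ gives a stochastically larger law, hence a larger upper tail $\Pr(X\ge x)$ at every integer $x$), $G_{\textup H}(x;N,N-k,N_1)$ is decreasing in $k$; thus $p_{k,c}^{\textup H}$ is decreasing in $k$.

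For part (b), fix $k$ and $\alpha$. Since $\tau_{(k)}$ is a fixed quantity and $H_{k,\tau_{(k)}}$ of \eqref{eq:Hkc} holds, Proposition \ref{prop:hyper_pval} gives $\Pr(p_{k,\tau_{(k)}}^{\textup H}\le\alpha)\le\alpha$, so $\{c:p_{k,c}^{\textup H}>\alpha\}$ covers $\tau_{(k)}$ with probability at least $1-\alpha$; by part (a) this set is an up-set in $c$, hence an interval unbounded above. To locate its left endpoint, use that $G_{\textup H}(\cdot)$ is a decreasing step function and the elementary identity $\Pr(X\ge x)>\alpha\Leftrightarrow\Pr(X\le x-1)<1-\alpha$, which together with the definition of $Q_{\textup H}$ shows $p_{k,c}^{\textup H}>\alpha\Leftrightarrow n(c)\le Q_{\textup H}(1-\alpha;N,N-k,N_1)=:q$. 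Since exactly $N_1-n(c)$ treated participants have observed outcome $\le c$, the condition $n(c)\le q$ is equivalent to $y_{(N_1-q)}\le c$; and because the support of $\mathrm{Hyper}(N,N-k,N_1)$ lies in $\{0,\dots,N_1\}$ we have $q\le N_1$, so $k(\alpha)=N_1-q\in\{0,\dots,N_1\}$ and the interval equals $[y_{(k(\alpha))},\infty)$, with $y_{(0)}=-\infty$ handling the degenerate case.

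For part (c), fix $c$ and $\alpha$ and set $k_0=N-N(c)\in\{0,\dots,N\}$, so that $H_{k_0,c}$ holds; Proposition \ref{prop:hyper_pval} (together with the trivial case $k_0=0$, where $p_{0,c}^{\textup H}=\Pr(X\ge n(c))=1$ because $\mathrm{Hyper}(N,N,N_1)$ is degenerate at $N_1\ge n(c)$) yields $\Pr(p_{k_0,c}^{\textup H}\le\alpha)\le\alpha$, so $\{N-k:p_{k,c}^{\textup H}>\alpha,\,0\le k\le N\}$ covers $N(c)$ with probability at least $1-\alpha$. By part (a), $p_{k,c}^{\textup H}$ is increasing in $N-k$, so this set has the form $\{n_{c,\alpha},n_{c,\alpha}+1,\dots,N\}$ with $n_{c,\alpha}=N-\max\{k:G_{\textup H}(n(c);N,N-k,N_1)>\alpha,\,0\le k\le N\}$, as stated. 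The main obstacle is not any deep step but the careful bookkeeping: translating the strict tail inequality $p_{k,c}^{\textup H}>\alpha$ into the clean quantile condition $n(c)\le Q_{\textup H}(1-\alpha;\cdot)$ with the correct off-by-one handling, and dealing with ties among the treated outcomes $y_{(1)}\le\cdots\le y_{(N_1)}$ and the boundary cases ($k=0$, $k(\alpha)=0$, $N(c)\in\{0,N\}$), which is where a reader is most likely to want detail.
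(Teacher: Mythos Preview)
Your proposal is correct and follows essentially the same route the paper outlines: prove the monotonicity in part (a) from the decrease of $n(c)$ in $c$ and the stochastic ordering of the Hypergeometric family in its success parameter, then obtain (b) and (c) by Lehmann-style inversion of the valid $p$-value from Proposition~\ref{prop:hyper_pval}, with the monotonicity guaranteeing the one-sided interval form and the quantile identity $\Pr(X\ge x)>\alpha\Leftrightarrow x\le Q_{\textup H}(1-\alpha;\cdot)$ pinning down the endpoint. Your handling of the boundary cases ($k=0$, $k(\alpha)=0$) and ties is appropriate and matches what the paper needs.
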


Confidence interval results for $\tau_{(k)}$ can also be derived by applying results in \citet{sedransk1978confidence} who study inference for quantiles of a finite population based on simple random sampling. In Supplemental Material A.3, we prove the equivalence between the confidence interval results in Proposition \ref{prop:hyper_ci} and those derived by directly applying results in \citet{sedransk1978confidence}. In Supplemental Material A.3, we also prove that the confidence interval results for $N(c)$ in Proposition \ref{prop:hyper_ci} are equivalent to results in \citet{Wang:2015} who study inference for Hypergeometric distribution parameters. Hence, by the optimality results in \citet{Wang:2015}, confidence intervals for $N(c)$ in Proposition \ref{prop:hyper_ci}(c) are optimal one-sided confidence intervals. 

Evaluating the ITE profile of a treatment regimen goes beyond constructing confidence intervals for each individual ITE. Proposition \ref{prop:simul_ci} derives the simultaneous confidence intervals for multiple ITE quantiles, which provides a valid summary of treatment effect profile and is of primary interest in practice.

\begin{proposition}\label{prop:simul_ci} 
Consider a CRE design and suppose that we are interested in multiple quantiles of ITEs: $\tau_{(k_1)}, \tau_{(k_2)}, \ldots, \tau_{(k_J)}$, where $1\le k_1 \le \ldots \le k_J \le N$. 
Under a CRE design,
the simultaneous coverage probability for quantiles of ITEs is 
\begin{align}\label{eq:joint_ci_Y0_constant}
    \Pr\left( \bigcap_{j=1}^J \{ \tau_{(k_j)} \ge y_{(k_j(\alpha))} \} \right )
   & \ge  
    1- \Pr\left( \bigcup_{j=1}^J 
    \left\{ 
    \sum_{i=1}^N Z_i \I(i > k_j) > Q_{\textup{H}}(1-\alpha; N, N-k_j, N_1)
    \right\} \right ),
\end{align}
where the equality holds when all ITEs $\tau_i$'s are distinct. 
Recall that $y_{(k_j(\alpha))}$ is the  observed outcome of rank $k_j(\alpha) \equiv  N_1 - Q_{\textup{H}}(1-\alpha; N, N-k_j, N_1)$ in the treatment group.
\end{proposition}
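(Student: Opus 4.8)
The plan is to reduce the joint coverage statement to the single-quantile confidence intervals of Proposition \ref{prop:hyper_ci}(b) together with one deterministic set inclusion, after which only elementary set algebra remains. First I would fix notation: since $Y_i(0)=0$ for every $i$, the ITEs coincide with the fixed potential outcomes under treatment, $\tau_i = Y_i(1)$, and we may relabel the participants so that $\tau_1 \le \tau_2 \le \cdots \le \tau_N$, breaking ties arbitrarily; under the CRE design $\mathbf{Z}$ is uniform over all binary vectors with $N_1$ ones, so $n(c)=\sum_{i=1}^N Z_i\I(\tau_i>c)$ is Hypergeometric as in Proposition \ref{prop:hyper_pval}, and $y_{(k_j(\alpha))}$ is well defined for each $j$. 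Recall from Proposition \ref{prop:hyper_ci}(b) that the event $\{\tau_{(k_j)} \ge y_{(k_j(\alpha))}\}$ is exactly the event that the Lehmann-style interval $[y_{(k_j(\alpha))},\infty)$ covers $\tau_{(k_j)}$, i.e.\ that $p^{\textup{H}}_{k_j, c} > \alpha$ at $c = \tau_{(k_j)}$.

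Next I would translate each single-quantile coverage event into a statement about a hypergeometric count. Plugging $c = \tau_{(k_j)}$ into $p^{\textup{H}}_{k_j,c} = G_{\textup{H}}(n(c); N, N-k_j, N_1)$ and using that the hypergeometric tail probability $G_{\textup{H}}(\cdot; N, N-k_j, N_1)$ exceeds $\alpha$ iff its (integer) argument is at most $Q_{\textup{H}}(1-\alpha; N, N-k_j, N_1)$ --- the same tail/quantile duality already used to prove Proposition \ref{prop:hyper_ci}(b) --- one obtains $\{\tau_{(k_j)} \ge y_{(k_j(\alpha))}\} = \{\, n(\tau_{(k_j)}) \le Q_{\textup{H}}(1-\alpha; N, N-k_j, N_1)\,\}$. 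The remaining point is that, for the sorted labeling, $\I(\tau_i > \tau_{(k_j)}) \le \I(i > k_j)$ for every $i$ (because $i \le k_j$ forces $\tau_i \le \tau_{(k_j)}$), hence $n(\tau_{(k_j)}) \le \sum_{i=1}^N Z_i \I(i > k_j)$ for every realization, with equality whenever the $\tau_i$'s are distinct. Equivalently, one can bypass the $p$-value bookkeeping and argue directly: if $\sum_{i=1}^N Z_i\I(i>k_j) \le Q_{\textup{H}}(1-\alpha;N,N-k_j,N_1)$, then at least $k_j(\alpha) = N_1 - Q_{\textup{H}}(1-\alpha;N,N-k_j,N_1)$ of the $N_1$ treated units lie among the $k_j$ units with smallest ITEs, which forces the $k_j(\alpha)$-th smallest treated outcome $y_{(k_j(\alpha))}$ to be at most $\tau_{(k_j)}$ (and the case $k_j(\alpha)=0$ is trivial since $y_{(0)}=-\infty$).

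Finally I would combine the pieces. The previous paragraph yields the deterministic inclusion $\bigcap_{j=1}^J \{\sum_{i=1}^N Z_i \I(i>k_j) \le Q_{\textup{H}}(1-\alpha;N,N-k_j,N_1)\} \subseteq \bigcap_{j=1}^J \{\tau_{(k_j)} \ge y_{(k_j(\alpha))}\}$, so taking probabilities and writing the left-hand side as the complement of $\bigcup_{j=1}^J \{\sum_{i=1}^N Z_i \I(i>k_j) > Q_{\textup{H}}(1-\alpha;N,N-k_j,N_1)\}$ gives \eqref{eq:joint_ci_Y0_constant}; when all ITEs are distinct the inclusion is an equality for each realization, hence so is the probability identity.

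The computations are essentially bookkeeping, so there is no deep obstacle; the parts that need care are (i) the precise tail/quantile duality for the discrete (hypergeometric) law, including the convention for $Q_{\textup{H}}$ and strict-versus-weak inequalities, which must be handled exactly as in the proof of Proposition \ref{prop:hyper_ci}, and (ii) the treatment of ties --- ensuring the inequality $\I(\tau_i > \tau_{(k_j)}) \le \I(i>k_j)$, and therefore the one-sided coverage bound, holds for any tie-breaking rule fixing the ordering, with equality recovered precisely under distinctness.
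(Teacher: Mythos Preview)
Your proposal is correct and follows essentially the same route as the paper's argument: relabel so that $\tau_1\le\cdots\le\tau_N$, identify each coverage event $\{\tau_{(k_j)}\ge y_{(k_j(\alpha))}\}$ with $\{n(\tau_{(k_j)})\le Q_{\textup{H}}(1-\alpha;N,N-k_j,N_1)\}$ via Proposition~\ref{prop:hyper_ci}(b), bound $n(\tau_{(k_j)})$ by $\sum_i Z_i\I(i>k_j)$ using $\I(\tau_i>\tau_{(k_j)})\le \I(i>k_j)$, and pass to complements. Your noted care points (the discrete tail/quantile duality and ties) are exactly the places the paper's supplementary proof also spells out, and your treatment of them is sound.
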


The lower bound of the simultaneous coverage probability in \eqref{eq:joint_ci_Y0_constant} is sharp and obtained when all ITEs are distinct across all $N$ participants. 
More importantly, the lower bound does not depend on any unknown quantities and can be efficiently approximated using the Monte Carlo method.

Technically, the assay limit of detection specifies the lowest level of immune response to be detected and merely places an upper bound on $Y_i(0)$ rather than specifies its precise value. Proposition \ref{prop:Y(0)<=0} extends previous results and suggests that $p$-values and confidence statements derived from Propositions \ref{prop:hyper_pval}-\ref{prop:simul_ci} remain valid when $Y_i(0) \le 0$ rather than $Y_i(0) = 0$.

\begin{proposition}\label{prop:Y(0)<=0}
Propositions 1-3 hold when $Y_i(0) \le 0$.
\end{proposition}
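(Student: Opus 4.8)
The plan is to reduce the claim to the already-established Propositions~\ref{prop:hyper_pval}--\ref{prop:simul_ci} by coupling the true experiment to an ``as-if'' experiment in which $Y_i(0)$ is set to $0$ for every $i$. Concretely, I would introduce surrogate potential outcomes $Y_i^\ast(1) = Y_i(1)$ and $Y_i^\ast(0) = 0$, with surrogate ITEs $\tau_i^\ast = Y_i(1)$. Since $Y_i(0) \le 0$, one has $\tau_i = Y_i(1) - Y_i(0) \ge Y_i(1) = \tau_i^\ast$ for every $i$; passing to order statistics and to counting functions gives $\tau_{(k)} \ge \tau_{(k)}^\ast$ for all $1\le k\le N$ and $N(c) = \sum_i \I(\tau_i > c) \ge \sum_i \I(\tau_i^\ast > c) \equiv N^\ast(c)$ for all $c$. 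The surrogate experiment has $Y_i^\ast(0) = 0$ for all $i$ exactly, so Propositions~\ref{prop:hyper_pval}--\ref{prop:simul_ci} apply to it verbatim.

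First I would record the key structural fact that every object on the ``procedure'' side of Propositions~\ref{prop:hyper_pval}--\ref{prop:simul_ci} --- the statistic $n(c) = \sum_i Z_i \I(Y_i > c)$, the sorted treated outcomes $y_{(1)}\le\cdots\le y_{(N_1)}$, the $p$-value $p_{k,c}^{\textup{H}}$, the limits $y_{(k(\alpha))}$, and the sets $\{n_{c,\alpha},\ldots,N\}$ --- is a function of $\bs{Z}$ and of $\{Y_i : Z_i=1\}$ only, and $\{Y_i : Z_i=1\} = \{Y_i(1): Z_i=1\}$ is identical in the two experiments. Then I would treat the three propositions in turn. For Proposition~\ref{prop:hyper_pval}: under $H_{k,c}$, the fact $\tau_i \ge Y_i(1) = Y_i$ for treated units gives $n(c) \le \sum_i Z_i \I(\tau_i > c)$, and the latter is the treated count in a simple random sample, hence Hypergeometric$(N, N(c), N_1)$, which is stochastically dominated by Hypergeometric$(N, N-k, N_1)$ because $N(c)\le N-k$; since $G_{\textup{H}}(\cdot\,; N, N-k, N_1)$ is non-increasing, $\Pr(p_{k,c}^{\textup{H}} \le \alpha) = \Pr(n(c) \ge x_\alpha) \le \Pr(\text{Hypergeometric}(N, N-k, N_1) \ge x_\alpha) \le \alpha$, where $x_\alpha$ is the smallest $x$ with $G_{\textup{H}}(x; N, N-k, N_1) \le \alpha$. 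For Proposition~\ref{prop:hyper_ci}: part~(a) is a deterministic monotonicity of $G_{\textup{H}}$ composed with the step function $c\mapsto n(c)$ and needs no assumption on $Y_i(0)$; the closed forms in (b) and (c) are then deterministic rearrangements; and the coverage claims follow from Lehmann-style inversion of the test in Proposition~\ref{prop:hyper_pval}, using that $\tau_{(k)} \ge \tau_{(k)}^\ast$ (resp.\ $N(c)\ge N^\ast(c)$) while $[y_{(k(\alpha))},\infty)$ (resp.\ $\{n_{c,\alpha},\ldots,N\}$) is a lower ray (resp.\ an upper set), so coverage of the true quantity dominates that of the surrogate, which is at least $1-\alpha$. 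For the inequality in Proposition~\ref{prop:simul_ci}: $\bigcap_j\{\tau_{(k_j)}\ge y_{(k_j(\alpha))}\} \supseteq \bigcap_j\{\tau_{(k_j)}^\ast\ge y_{(k_j(\alpha))}\}$ gives simultaneous coverage at least as large as in the surrogate, and the lower bound in \eqref{eq:joint_ci_Y0_constant} is numerically unchanged, because by exchangeability of $\bs{Z}$ the probability $\Pr(\sum_{i\in\mathcal{S}} Z_i > t)$ depends on $\mathcal{S}$ only through $|\mathcal{S}|$, so it equals the same Hypergeometric$(N, N-k_j, N_1)$ tail regardless of whether the top $N-k_j$ ranks are taken under $\tau$ or under $\tau^\ast$.

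The hard part will be the sharpness clause of Proposition~\ref{prop:simul_ci} --- that the bound in \eqref{eq:joint_ci_Y0_constant} is attained when all ITEs are distinct. The coupling argument only yields ``$\ge$'' for the true experiment, so to recover equality I would instead re-run the combinatorial identity from the original proof of Proposition~\ref{prop:simul_ci} with the \emph{true} $\tau_i$'s (rather than $\tau_i^\ast = Y_i(1)$) as the sorted population, checking that on the relevant event the ordering of the $\tau_i$'s and the locations of the $y_{(k_j(\alpha))}$'s interact with $\bs{Z}$ exactly as in the $Y_i(0)=0$ case up to a $\bs{Z}$-null event; this is precisely where distinctness of the ITEs is used. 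Everything else is bookkeeping about which quantities are observed and about the directions of the monotone comparisons $\tau_{(k)}\ge\tau_{(k)}^\ast$ and $N(c)\ge N^\ast(c)$.
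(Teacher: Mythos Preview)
Your coupling to the surrogate $\tau_i^\ast=Y_i(1)$ with the monotone comparisons $\tau_{(k)}\ge\tau_{(k)}^\ast$ and $N(c)\ge N^\ast(c)$ is correct and is essentially the paper's argument for the validity content of Propositions~\ref{prop:hyper_pval}--\ref{prop:simul_ci}: the key step is $n(c)=\sum_i Z_i\I(Y_i(1)>c)\le\sum_i Z_i\I(\tau_i>c)$, after which $p$-value validity, individual coverage, and the simultaneous-coverage inequality in \eqref{eq:joint_ci_Y0_constant} all go through unchanged.

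The one place your plan would fail is the sharpness clause of Proposition~\ref{prop:simul_ci}. Equality in \eqref{eq:joint_ci_Y0_constant} under distinct ITEs does \emph{not} extend to $Y_i(0)\le 0$. Take $N=2$, $N_1=1$, $(Y_1(1),Y_1(0))=(1,-10)$, $(Y_2(1),Y_2(0))=(2,0)$, so $(\tau_1,\tau_2)=(11,2)$ are distinct; with $J=1$, $k_1=1$, $\alpha=1/2$ one has $Q_{\textup{H}}(1/2;2,1,1)=0$, hence $k_1(\alpha)=1$, and the left side of \eqref{eq:joint_ci_Y0_constant} equals $\Pr(\tau_{(1)}\ge y_{(1)})=\Pr(2\ge y_{(1)})=1$ (since $y_{(1)}\in\{1,2\}$), while the right side equals $1-\Pr(Z_2=1)=1/2$. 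The combinatorial identity you hope to re-run requires that $y_{(k_j(\alpha))}$ be an order statistic of the treated $\tau_i$'s, which holds when $Y_i(0)=0$ (then $Y_i=\tau_i$ on $\{Z_i=1\}$) but not in general; there is no $\bs{Z}$-null correction that rescues it. Proposition~\ref{prop:Y(0)<=0} --- as the paper's surrounding text makes explicit (``$p$-values and confidence statements \ldots remain valid'') --- asserts only the validity parts, so the equality clause is not part of the target and you should drop the attempt to recover it.
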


\section{Simulation}
\label{sec: simulation}
\subsection{Goal and structure; methods to be compared; measurement of success}
\label{subsec: goal and structure}
The primary goal of this simulation study is to assess and compare the power of several competing methodologies reviewed in Section \ref{sec: review of methods} when comparing two vaccine regimens. To provide more relevant guidance for practice, we will use the immune response data generated from HVTN 086 as a basis for the data generating process. Specifically, we considered the following two data generating processes:

\begin{itemize}
     \item \textbf{DGP I:} We sampled BAMA responses of size $N_1$ against Con 6 gp120/B with replacement from the vaccine regimen \textsf{T1} and of size $N_0$ against Con 6 gp120/B with replacement from the vaccine regimen \textsf{T2}. 
      \item \textbf{DGP II:} We sampled BAMA responses of size $N_1$ against gp41 with replacement from the vaccine regimen \textsf{T1} and data of size $N_0$ against gp41 with replacement from the vaccine regimen \textsf{T2}.
\end{itemize}

\noindent For each of the $N = N_1 + N_0$ sampled data points, a Gaussian noise $\epsilon \sim N(0, 0.15^2)$ was added to their $\log_{10}$-transformed scale. \textsf{DGP I} represents a scenario where two vaccine regimens have similar spread in their immune responses and the treatment effects appear homogeneous. \textsf{DGP II} represents a distinct scenario where immune responses from two vaccine regimens have rather different spread and a heterogeneous ITE profile among vaccine recipients seems more plausible. We considered balanced design with the following three sets of sample sizes: $N_1 = N_0 = 30$, $N_1 = N_0 = 50$, and $N_1 = N_0 = 100$. 

For each data generating process, we performed randomization inference for the distribution function of ITEs (and equivalently the proportion of units with ITEs exceeding different thresholds). We constructed (i) individual confidence intervals for selected quantiles; and (ii) simultaneous confidence intervals for selected quantiles using the following methods and choices of test statistics: 

\begin{itemize}
    \item \textsf{M1:} The original method in \citet{caughey2021randomization}. We considered using Stephenson rank sum test with $s = 2$ (\textsf{M1-S2}) and $s = 6$ (\textsf{M1-S6});

    \item \textsf{M2:} The first enhanced method in \citet{CL} which combines inference for treated and control participants. This method involves choosing a test statistic when making inference for the treated and a second test statistic when making inference for the control. We considered the following $2 \times 2 = 4$ combinations of test statistics: Stephenson rank sum test with $s = 2$ or $s = 6$ when inferring ITEs for the treated in Step (A1) and Stephenson rank sum test with $s = 2$ or $s = 6$ when inferring ITEs for the control in Step (A2). These $4$ methods are referred to as \textsf{M2-S2-S2}, \textsf{M2-S2-S6}, \textsf{M2-S6-S2}, and \textsf{M2-S6-S6}. For instance, \textsf{M2-S2-S6} represents the method that uses Stephenson rank sum test with $s = 2$ in Step (A1) and with $s = 6$ in Step (A2).

    \item \textsf{M3:} The second enhanced method in \citet{CL} which considers a probabilistic allocation of the worst-case ITEs. This method also involves making inference twice, once using the raw data and a second time using data with flipped treatment assignments and negated outcomes, before combining the inference. Analogous to \textsf{M2}, we considered the following $2 \times 2 = 4$ combinations of test statistics: Stephenson rank sum test with $s = 2$ or $s = 6$ when inferring ITEs using raw data in Step (B1) and Stephenson rank sum test with $s = 2$ or $s = 6$ when inferring ITEs using data with flipped treatment assignments and negated outcomes in Step (B2). These $4$ methods are referred to as \textsf{M3-S2-S2}, \textsf{M3-S2-S6}, \textsf{M3-S6-S2}, and \textsf{M3-S6-S6}. 
\end{itemize}

For each data generating process, we repeated simulation $1,000$ times and measured the success of each method in recovering the ITE profile according to the following two criteria. First, for a selected quantile $k$ and the target estimand $\tau_{(k)}$, we reported the median of the $95\%$ one-sided lower confidence limits across $1,000$ simulations for each method. For a fixed $k$, a larger median value corresponds to a method being more powerful. Second, we compared the simultaneous confidence intervals to the true ITE profile $\mathcal{T} = \{\tau_{(i)},~i = 1, \dots, N\}$ by calculating $\textsf{SS} = |\mathcal{K}|^{-1}\sum_{k \in \mathcal{K}} (\hat{L}_{(k)} - \tau_{(k)})^2$, where $\mathcal{K}$ is a collection of quantiles of interest, and $\hat{L}_{(k)}$ denotes the lower confidence limit delivered by the method. For non-informative lower confidence limits, we set $\hat{L}_{(k)}$ to $-10$. We considered $\mathcal{K} = \{\lceil 0.5 \times N \rceil, \lceil 0.75 \times N \rceil, \lceil 0.8 \times N \rceil, \lceil 0.85 \times N \rceil, \lceil 0.9 \times N \rceil, \lceil 0.95 \times N \rceil\}$ and reported $\textsf{SS}$ averaged over $1,000$ simulations. A smaller $\textsf{SS}$ value corresponds to a method being more powerful and preferable. 

\subsection{Simulation results}
\label{subsec: simu results}

Figures \ref{fig: median 30} and \ref{fig: median 100} plot the median of the $95\%$ lower confidence limits for selected ITE quantiles when $N_1 = N_0 = 30$ and $N_1 = N_0 = 100$, respectively. Analogous plot when $N_1 = N_0 = 50$ can be found in Supplemental Material B.
For each fixed quantile, the red triangle represents the maximum median across different methods.
Table \ref{table: SS} further summarizes the average \textsf{SS} of each method under different data generating processes. 

We identify several consistent trends from the simulation results. First, each of the three methods exhibits improved power, as reflected by smaller average $\textsf{SS}$, as the sample size $N = N_1 + N_0$ increases. For instance, the average \textsf{SS} drops from $1.09$ to $0.78$ when $N$ increases from $60$ to $200$.
Second, methods \textsf{M2} and \textsf{M3} in general largely outperform \textsf{M1} in both pointwise and simultaneous inference. In fact, the gain of \textsf{M2} and \textsf{M3} over \textsf{M1} in relatively smaller quantiles like $50\%$ and $75\%$ quantiles can be quite significant. Third, unlike \textsf{M1} and \textsf{M2} whose pointwise and simultaneous inference coincides, \textsf{M3} suffers from multiple testing correction although the loss in power is negligible when the sample size is only moderately large (e.g., $N_1 = N_0 = 50$). For instance, the average \textsf{SS} for \textsf{M3-S2-S6} increases from $1.05$ to $1.10$ when $N_1 = N_0 = 50$.

When we examine different choices of test statistics for \textsf{M2}, methods \textsf{M2-S2-S6} and \textsf{M2-S6-S6} have similar performance and both outperform \textsf{M2-S2-S2} and \textsf{M2-S6-S2} for both DGP I and DGP II. Both \textsf{M2-S2-S6} and \textsf{M2-S6-S6} use the Stephenson rank sum test with a large $s$, e.g., $s = 6$, when the treatment status is flipped (i.e., Step A2 in \textsf{M2} and Step B2 in \textsf{M3}). This is consistent with the observation that a large $s$ is preferred when treated participants (or in this case, control participants before flipping the treatment status) have right-skewed outcomes and/or many large outliers \citep{caughey2021randomization}. Similar observation stands for method \textsf{M3}.

\begin{table}[ht]
\resizebox{\textwidth}{!}{
\centering
\begin{tabular}{lrrlrrrrrrrrrr}
   \hline
DGP & $N_1$ & $N_0$ & Inference & M1-S2 & M1-S6 & M2-S2-S2 & M2-S2-S6 & M2-S6-S2 & M2-S6-S6 & M3-S2-S2 & M3-S2-S6 & M3-S6-S2 & M3-S6-S6 \\ 
  \hline
DGP I &  30 &  30 & pointwise & 73.41 & 24.27 & 23.96 & 1.09 & 1.26 & 0.96 & 48.84 & 2.23 & 2.63 & 1.36 \\ 
  DGP I &  30 &  30 & simultaneous & 73.41 & 24.27 & 23.96 & 1.09 & 1.26 & 0.96 & 49.26 & 24.10 & 24.25 & 24.05 \\ 
  DGP I &  50 &  50 & pointwise & 49.10 & 24.19 & 23.88 & 0.92 & 1.08 & 0.87 & 24.30 & 1.05 & 1.21 & 1.02 \\ 
  DGP I &  50 &  50 & simultaneous & 49.09 & 24.19 & 23.88 & 0.92 & 1.07 & 0.87 & 24.47 & 1.10 & 1.26 & 1.07 \\ 
  DGP I & 100 & 100 & pointwise & 48.78 & 24.10 & 23.81 & 0.78 & 0.92 & 0.76 & 24.01 & 0.82 & 0.96 & 0.81 \\ 
  DGP I & 100 & 100 & simultaneous & 48.78 & 24.10 & 23.81 & 0.78 & 0.92 & 0.76 & 24.07 & 0.87 & 1.00 & 0.86 \\ 
  DGP II &  30 &  30 & pointwise & 64.92 & 21.67 & 21.15 & 1.31 & 2.01 & 1.37 & 43.09 & 2.30 & 2.74 & 1.39 \\ 
  DGP II &  30 &  30 & simultaneous & 64.92 & 21.67 & 21.16 & 1.31 & 2.01 & 1.37 & 43.36 & 20.81 & 21.59 & 20.81 \\ 
  DGP II &  50 &  50 & pointwise & 43.19 & 21.62 & 21.06 & 1.18 & 1.90 & 1.27 & 21.44 & 1.14 & 1.97 & 1.14 \\ 
  DGP II &  50 &  50 & simultaneous & 43.16 & 21.62 & 21.06 & 1.18 & 1.90 & 1.27 & 21.56 & 1.20 & 2.03 & 1.20 \\ 
  DGP II & 100 & 100 & pointwise & 43.04 & 21.57 & 20.98 & 1.05 & 1.79 & 1.17 & 21.19 & 0.92 & 1.77 & 0.92 \\ 
  DGP II & 100 & 100 & simultaneous & 43.04 & 21.57 & 20.98 & 1.05 & 1.79 & 1.17 & 21.25 & 0.97 & 1.82 & 0.97 \\ 
   \hline
\end{tabular}}
\caption{$\textsf{SS}$ averaged over $1,000$ simulations derived from each method for different data generating processes.}
\label{table: SS}
\end{table}

\begin{figure}[ht]
    \centering
    \includegraphics[width = 0.8\textwidth]{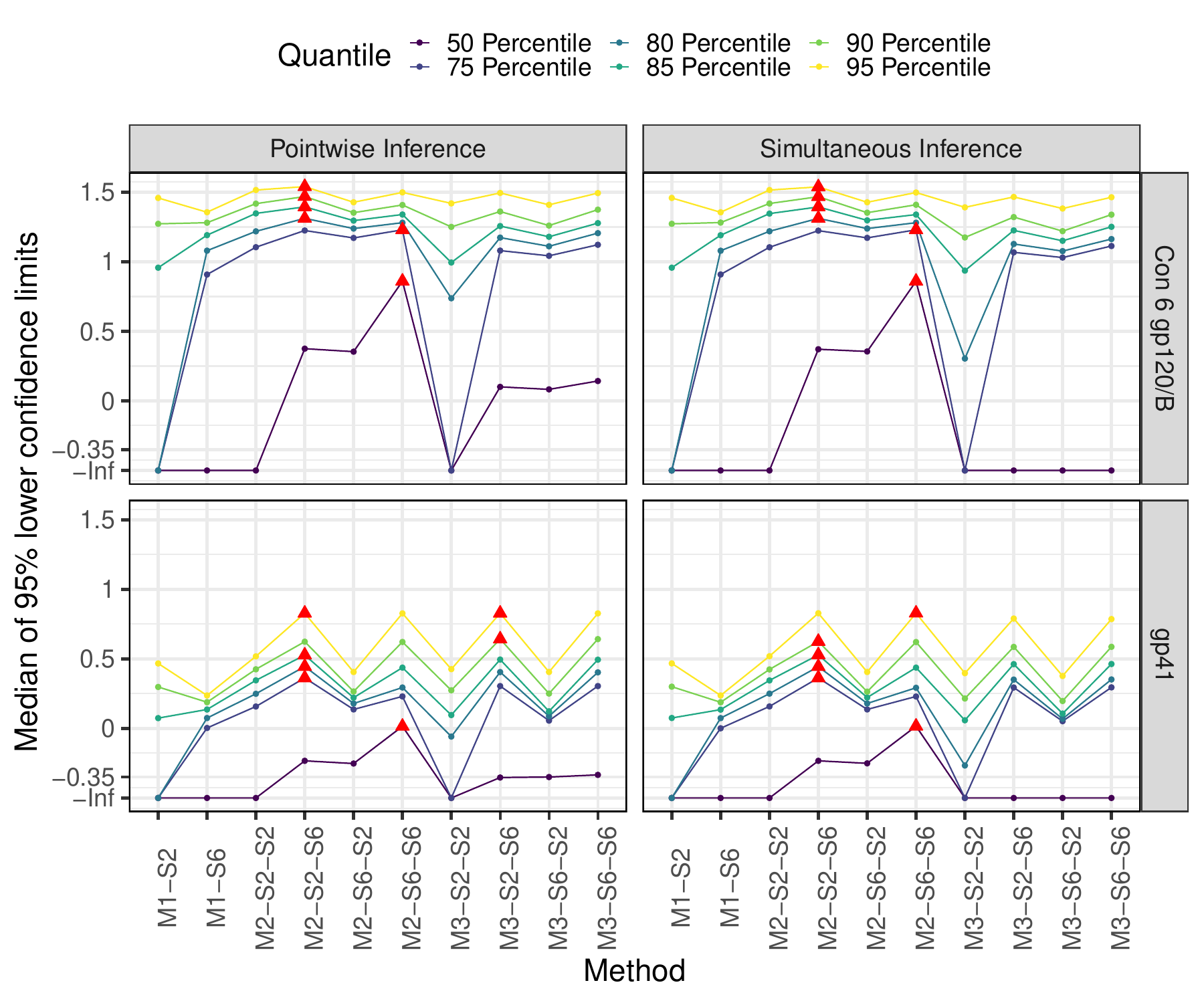}
    \caption{Median of the $95\%$ lower confidence limits of $\tau_{(k)}$ over 1000 simulations derived from each method, for \textbf{DGP I} (Con 6 gp 120/B) and \textbf{DGP II} (gp41) with $N_1 = N_0 = 30$ and $k = \lceil 0.5 \times N \rceil, \lceil 0.75 \times N \rceil, \lceil 0.8 \times N \rceil, \lceil 0.85 \times N \rceil, \lceil 0.9 \times N \rceil, \text{and }\lceil 0.95 \times N \rceil$. The red triangle represents the largest median for each quantile.}
    \label{fig: median 30}
\end{figure}

\begin{figure}[ht]
    \centering
    \includegraphics[width = 0.8\textwidth]{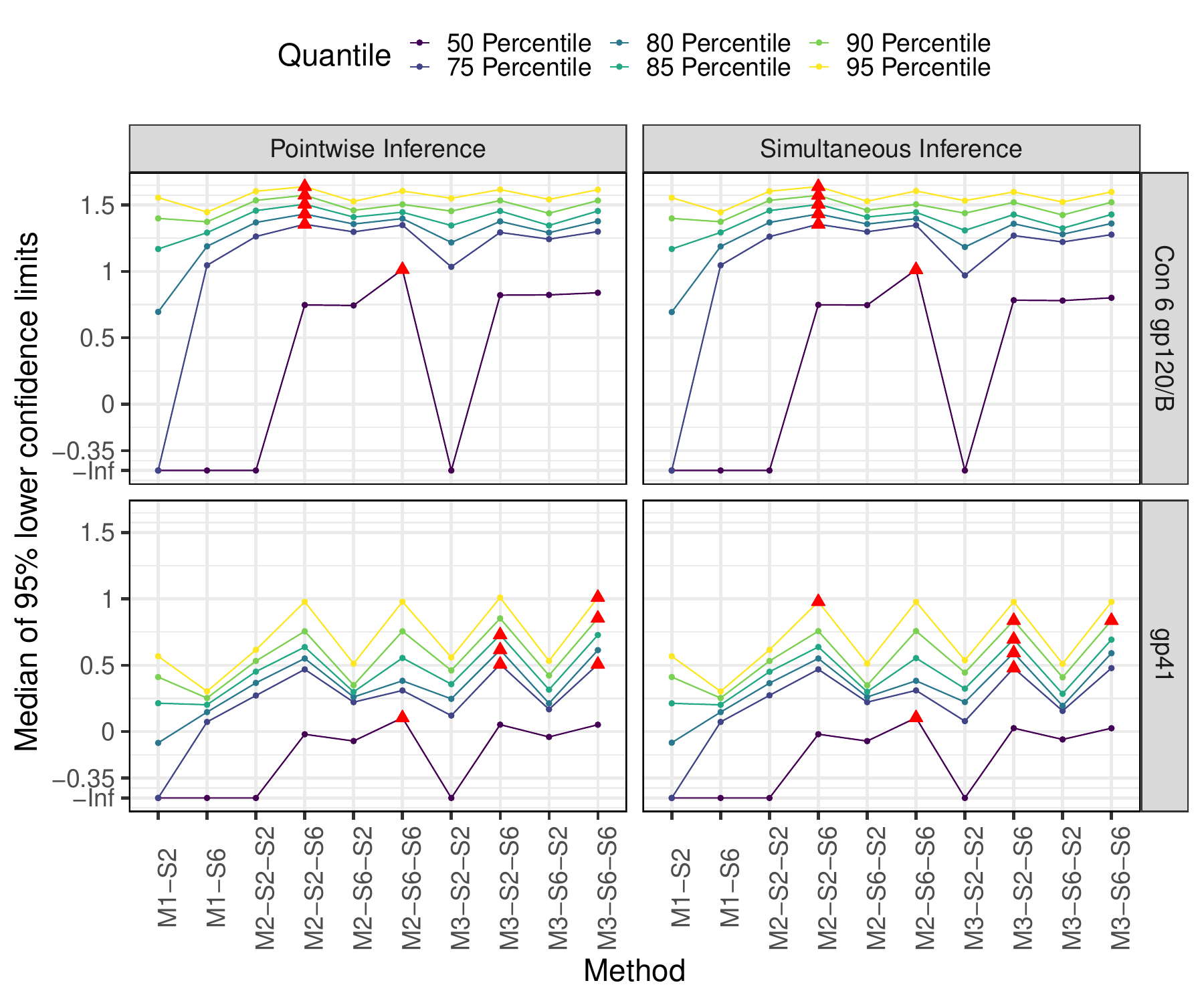}
    \caption{Median of the $95\%$ lower confidence limits of $\tau_{(k)}$ over 1000 simulations derived from each method, for \textbf{DGP I} and \textbf{DGP II} with $N_1 = N_0 = 100$ and $k = \lceil 0.5 \times N \rceil, \lceil 0.75 \times N \rceil, \lceil 0.8 \times N \rceil, \lceil 0.85 \times N \rceil, \lceil 0.9 \times N \rceil, \text{and }\lceil 0.95 \times N \rceil$. The red triangle represents the largest median for each quantile.}
    \label{fig: median 100}
\end{figure}

\section{Application to early phase HIV-1 vaccine trials}
\label{sec: case study}

\subsection{HVTN 086}
\label{subsec: HVTN 086}
Heterogeneity in vaccine-induced immune responses has been widely observed among vaccinees.
For instance, as shown in Figure \ref{fig: 086 heterogeneity}, vaccine-induced immune responses are heterogeneous among participants receiving 4 vaccine regimens in the HVTN 086 study, spanning from participants with no response (``nonresponders'') to those with exceptional responses (``high-responders"). Recently, \citet{huang2022baseline} conducted a comprehensive meta-analysis exploring the variations in immune responses induced by $26$ vaccine regimens. 
We focus on the immunogenicity data derived from HVTN 086, a multicenter, randomized, placebo-controlled phase-1 clinical trial studying 4 vaccine regimens: SAAVI MVA-C priming with sequential or concurrent Novartis subtype C
gp140/MF59 vaccine boost and SAAVI DNA-C2 priming with SAAVI MVA-C boosting, with or without Novartis subtype C gp140/MF59 vaccine.
The study comprised 4 study arms. Each arm planned to enroll $46$ HIV-negative, healthy, vaccine-na\"ive adult participants between $18$ and $45$ in the Republic of South Africa, among whom $38$ were randomly assigned to one candidate vaccine regimen and the other $8$ to the placebo. Study participants were randomized to one of the $4$ study arms, and within each study arm, further randomized to the active vaccine regimen or placebo. Below, we will follow \citet{huang2022baseline} and consider data from study participants who successfully completed all study visits and received all scheduled vaccination.

\subsection{Characterizing immunogenicity profiles against placebo}
\label{subsec: case study compare to placebo}
Our first goal is to characterize each vaccine regimen's immunogenicity profile of antigen-specific immune responses within each study arm.
The primary outcome of interest is the serum IgG response to the antigen Con6 gp120/B measured by a validated binding antibody multiplex assay (BAMA) $2$ weeks post the last vaccination.
 Because study participants were all na\"ive to the antigen, it is reasonable to assume that their potential outcomes under placebo were less than or equal to $100$, the limit of detection of the BAMA assay. 

Figure \ref{fig: realdata_compare_placebo} plots the $95\%$ one-sided confidence intervals for selected ITE quantiles using methods described in Section \ref{sec: treatment against placebo}. 
The vertical dashed red lines indicate the $95\%$ lower confidence limits of the sample average treatment effect derived from a randomization-based test based on the t-statistic \citep{neyman1923application, li2018asymptotic}. Inference for the effect quantiles largely enriches the data summary based on the SATE; for instance, in addition to stating that the lower confidence limit of SATE of the vaccine regimen \textsf{T1} (comparing to the placebo) is $2.15$ (in the $\log_{10}$-scale), researchers could further report at a specified confidence level that the largest ITE is at least $2.49$, the top $25\%$ ITE is at least $2.40$, and the median ITE is at least $2.30$, etc. Moreover, unlike the inference for the SATE that requires large-sample approximation, inference for the quantiles is \emph{exact}. Figure \ref{fig: realdata_compare_placebo} also suggests that the simultaneous confidence intervals are very similar to pointwise confidence intervals, suggesting that conducting simultaneous inference does not sacrifice much power. 

\begin{figure}[ht]
\centering
\subfloat[Inference for a single quantile of treatment effects]{
	\label{subfig: point_realdata_compare_placebo}
	 \includegraphics[width=0.47\textwidth]{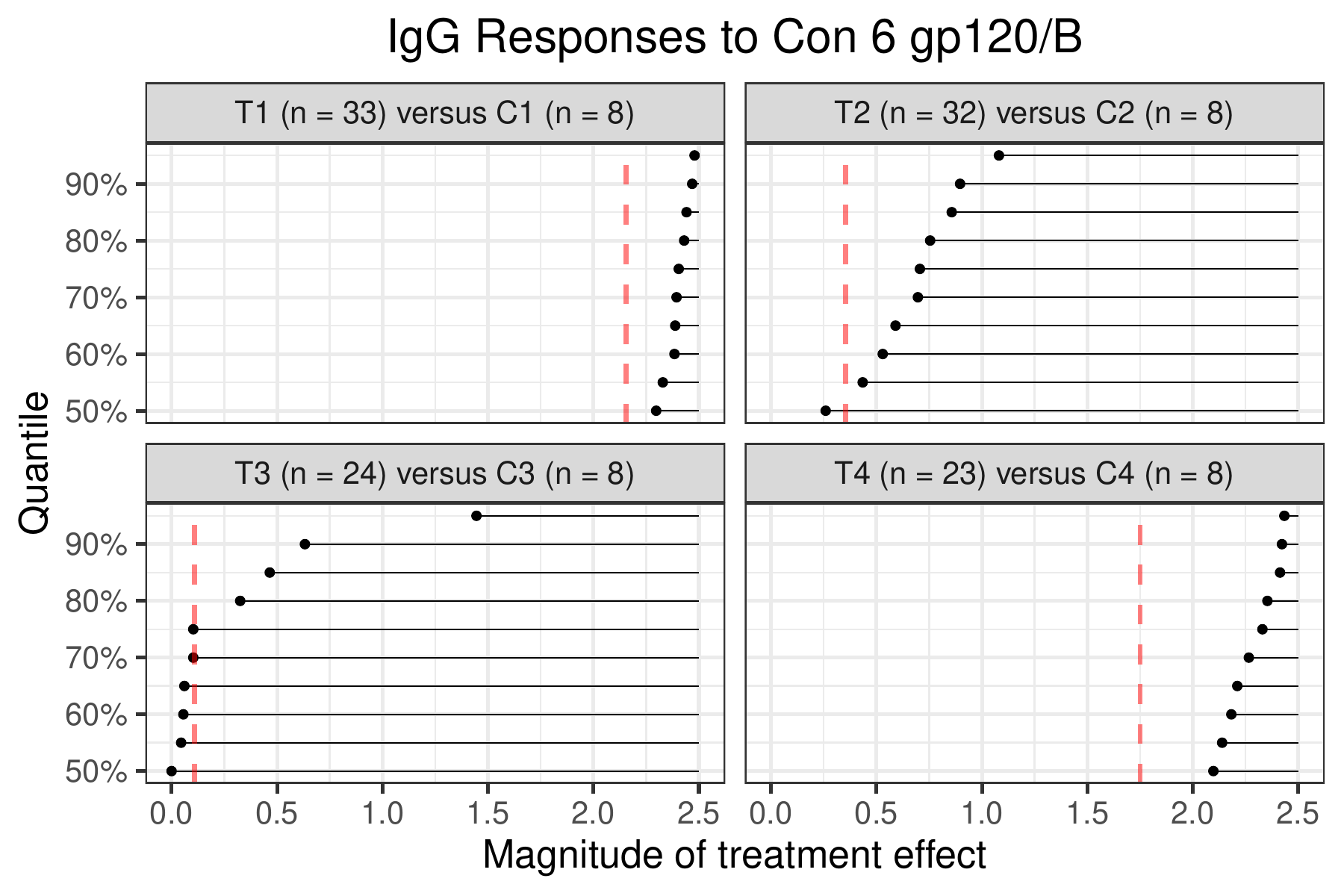}}
\subfloat[Simultaneous inference for multiple quantiles of treatment effects]{
	\label{subfig: simul_realdata_compare_placebo}
	\includegraphics[width=0.47\textwidth]{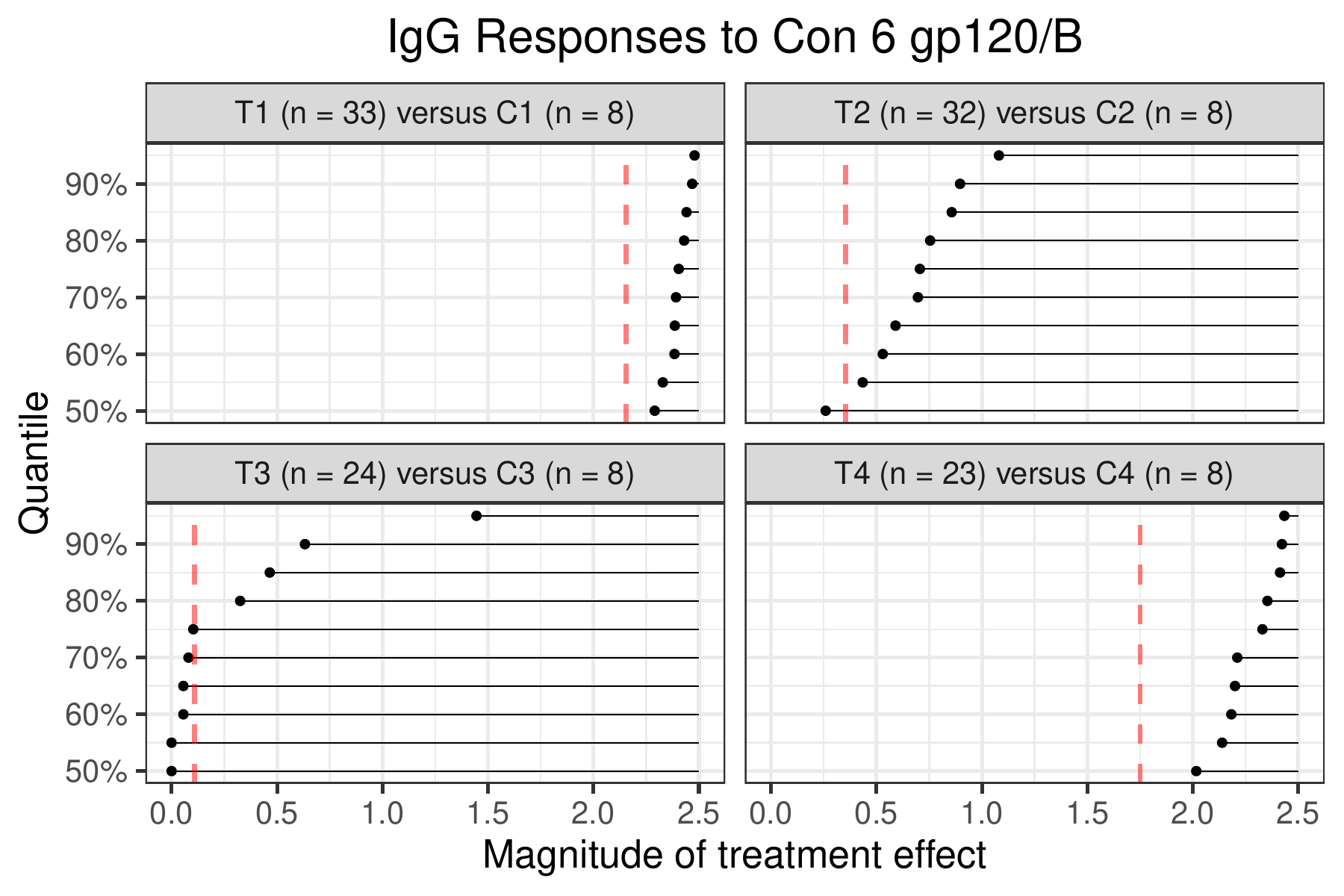}}
 \caption{Vaccine regimens versus placebo: $95\%$ (a) pointwise and (b) simultaneous one-sided confidence intervals of $\tau_{(k)}$'s for $k =  \lceil 0.5 \times N \rceil,  \lceil 0.55 \times N \rceil,  \lceil 0.6 \times N \rceil,  \lceil 0.65 \times N \rceil, \lceil 0.7 \times N \rceil,\lceil 0.75 \times N \rceil, \lceil 0.8 \times N \rceil, \lceil 0.85 \times N \rceil, \lceil 0.9 \times N \rceil, \text{and }\lceil 0.95 \times N \rceil$.}
\label{fig: realdata_compare_placebo}
\end{figure}

Inferred ITE profiles also facilitate comparing and ranking $4$ candidate regimens via making inference for $N(c)$, the number of participants with ITEs exceeding a given threshold $c$. Table \ref{table: realdata LB of N(c)} summarizes the $95\%$ lower confidence limits of $N(c)$ for each regimen and selected values of $c$. According to Table \ref{table: realdata LB of N(c)}, 
a $95\%$ confidence interval for $N(2)$ is $[31, 41]$ for \textsf{T1}, suggesting that we are $95\%$ confident that at least $31/41 = 75.6\%$ participants had a treatment effect as large as $2$ (in the $\log_{10}$-scale). Similarly, we are $95\%$ confident that at least $17/31 = 54.8\%$ participants had a treatment effect as large as $2$ (in the $\log_{10}$-scale) for regimen \textsf{T4}. Based on these results, if researchers are interested in advancing a vaccine regimen that can elicit high immune responses among a large proportion of participants, then \textsf{T1} is most promising based on data generated from this early phase clinical trial. 

\begin{table}[H]
\centering
\begin{tabular}{ccccccc}
  Regimen / c & 0 & 0.5 & 1 & 1.5 & 2 \\ 
  \hline
 T1 (n = 33) vs C1 (n = 8) &  40 &  40 &  40 &  40 &  31 \\ 
   T2 (n = 32) vs C2 (n = 8) &  27 &  17 &   3 &   1 &   0 \\ 
   T3 (n = 24) vs C3 (n = 8) &  15 &   4 &   3 &   0 &   0 \\ 
   T4 (n = 23) vs C4 (n = 8) &  29 &  29 &  24 &  23 &  17 \\ 
\end{tabular}
\caption{Pointwise $95\%$ lower confidence limits of $N(c)$ for four vaccine regimens, with $c=0, 0.5, 1, 1.5$ and $2$.}
\label{table: realdata LB of N(c)}
\end{table}

\subsection{Head-to-head comparisons of two vaccine regimens}
\label{subsec: compare two vaccine regimen}
Having a placebo arm and an \emph{a priori} known $Y(0)$ is a favorable scenario. Below, we consider a direct, head-to-head comparison of three pairs of vaccine regimens in HVTN 086: \textsf{T1} vs. \textsf{T2}, \textsf{T1} vs. \textsf{T3}, and \textsf{T1} vs. \textsf{T4}. A head-to-head comparison of two active vaccine regimens is useful because in most recent early phase vaccine trials, a placebo arm is no longer employed and study participants are often randomized to different active vaccine regimens. Figure \ref{fig: head_to_head_comparison} shows the $95\%$ one-sided simultaneous confidence intervals for selected ITE quantiles of regimen \textsf{T1} versus the other three regimens.
The vertical dashed red
lines show the $95\%$ lower confidence limits of the SATE derived from a randomization test based on the t-statistic. For both Con 6 gp120/B and gp41, we conducted ITE inference with the method \textsf{M2}. Specifically, we chose the Stephenson rank sum statistic with $s = 6$ in both Step (A1) and Step (A2) for Con 6 gp120/B; as for gp41, we again used the Stephenson rank sum statistic but with $s=2$ in Step (A1) and $s = 6$ in Step (A2). These choices were guided by the simulation studies in Section \ref{sec: simulation}. 


\begin{figure}[ht]
\centering
\subfloat[Simultaneous inference for Con 6 gp120/B using \textsf{M2-S6-S6}]{
	\label{subfig: con6_head_to_head}
	 \includegraphics[width=0.47\textwidth]{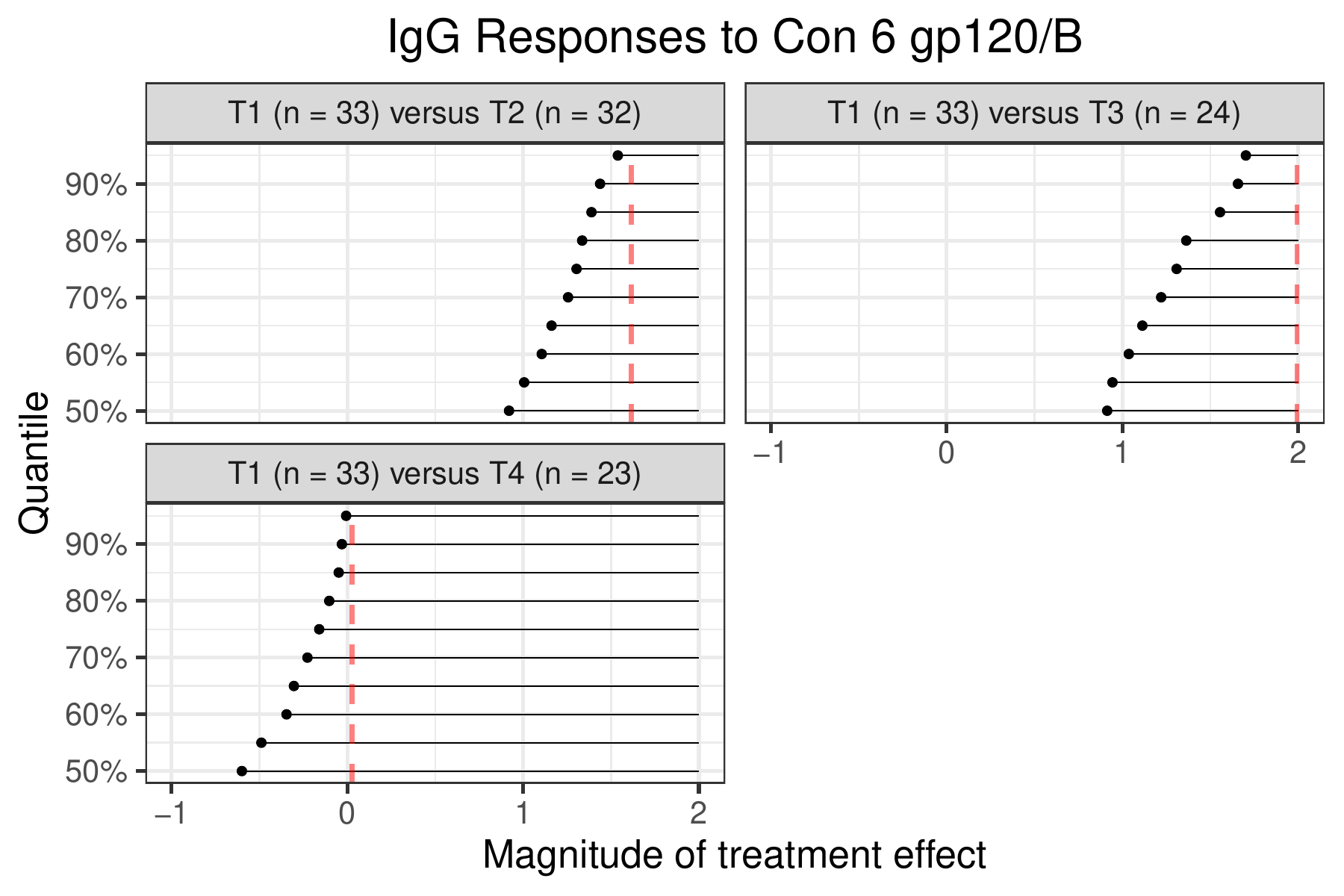}}
\subfloat[Simultaneous inference for gp41 using \textsf{M2-S2-S6}]{
	\label{subfig: gp41_head_to_head}
	\includegraphics[width=0.47\textwidth]{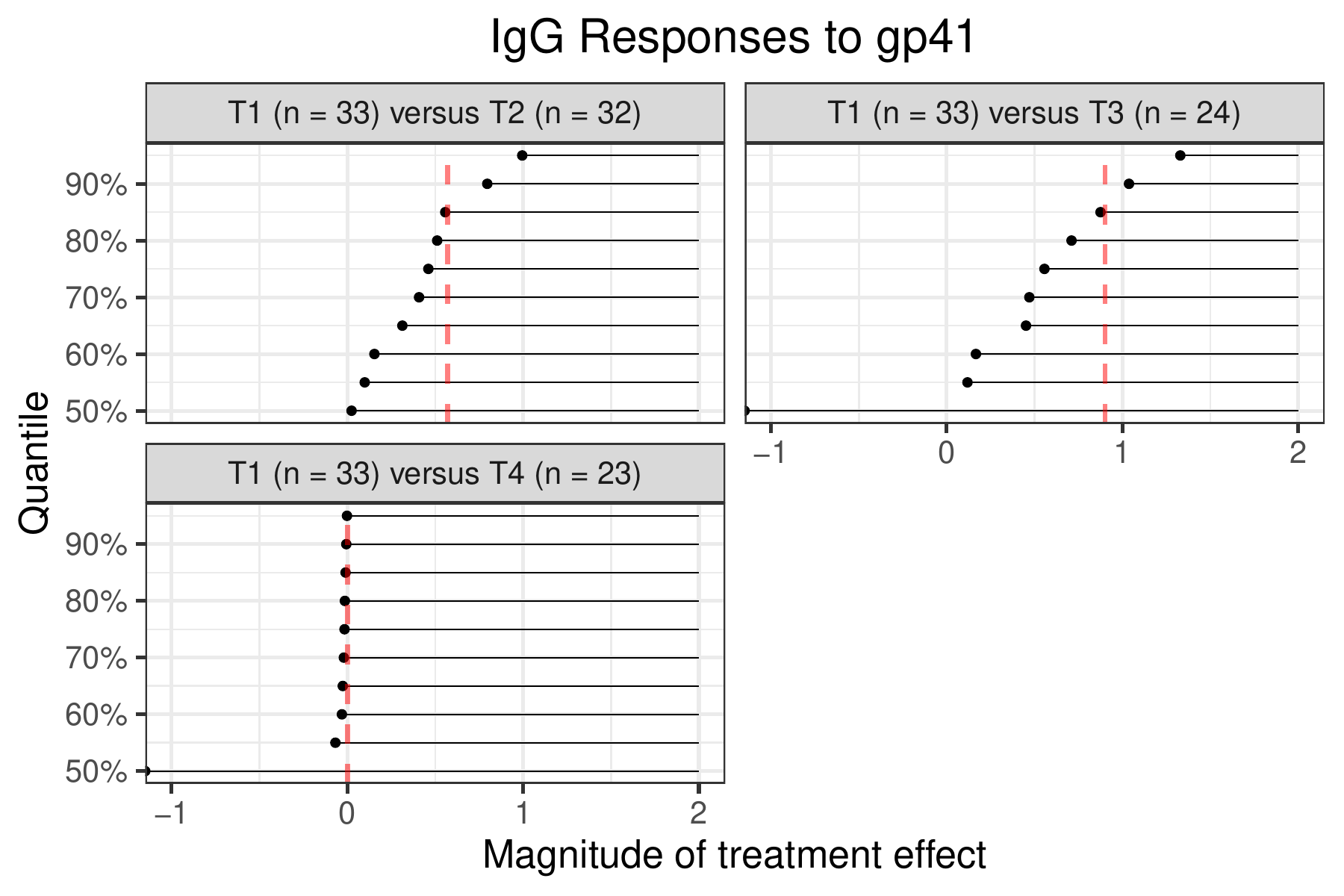}}
 \caption{Head-to-head comparison of vaccine regimens in terms of the binding antibody response to (a) Con 6 gp120/B  and (b) gp41: $95\%$ simultaneous one-sided confidence intervals of $\tau_{(k)}$'s for $k =  \lceil 0.5 \times N \rceil,  \lceil 0.55 \times N \rceil,  \lceil 0.6 \times N \rceil,  \lceil 0.65 \times N \rceil, \lceil 0.7 \times N \rceil,\lceil 0.75 \times N \rceil, \lceil 0.8 \times N \rceil, \lceil 0.85 \times N \rceil, \lceil 0.9 \times N \rceil, \text{and }\lceil 0.95 \times N \rceil$.}
\label{fig: head_to_head_comparison}
\end{figure}



According to Figure \ref{fig: head_to_head_comparison}(b), the $95\%$ lower confidence limit for the SATE comparing \textsf{T1} versus \textsf{T2} is $0.74$ for gp41. Inference for ITE further reveals treatment effect heterogeneity of \textsf{T1} versus \textsf{T2}: we are $95\%$ confident that more than 
$12.3\%$ of participants benefited more from \textsf{T1} compared to \textsf{T2} by at least $0.74$ in the $\log_{10}$-scale (or equivalently a $5.50$ times increase in the raw readout) and, at the same time, at least $4.6\%$  participants benefited by more than $1$ unit in the $\log_{10}$-scale (or equivalently $10$ times) and $1.5\%$ benefited by $1.25$ (or equivalently $17.78$ times). In fact, because the inference for ITE is simultaneously valid, we can immediately reject a constant additive treatment effect of $0.74$ for all participants at $95\%$ confidence level. Analogous inference can be made when comparing \textsf{T1} to \textsf{T3}.



On the other hand, according to Figure \ref{fig: head_to_head_comparison}(a), the confidence intervals for ITE quantiles comparing \textsf{T1} versus \textsf{T2} for Con 6 gp120/B cover $1.61$, the lower confidence limit of the SATE, simultaneously. For Con 6 gp120/B, a constant additive treatment effect of $1.61$ for all participants is in fact compatible with the observed data and cannot be rejected at 95\% confidence interval. Compared to gp41, response to Con 6 gp120/B appears to be less heterogeneous when comparing \textsf{T1} versus \textsf{T2} or \textsf{T3}. These conclusions are consistent with the visual display of the data in Figure \ref{fig: 086 heterogeneity}.



\subsection{A matched study of non-randomized arms}
\label{subsec: case study across trial comparison Gamma = 1}

We next considered a head-to-head, across-trial comparison of two regimens --- regimen \textsf{T1} from HVTN 086 (n = 33) and regimen \textsf{T4} from HVTN 205 (n = 60) --- based on the serum IgG response to
antigen gp41 \citep{huang2022baseline}. HVTN 205 was a phase 2a study designed to evaluate DNA and recombinant modified vaccinia Ankara (MVA62B) vaccines \citep{goepfert2014specificity}. To alleviate the ``trial selection bias," we used statistical matching to control for observed baseline characteristics. Specifically, we used an optimal tripartite matching method \citep{zhang2023matching} and constructed $33$ matched pairs, each consisting of one study participant receiving vaccine regimen \textsf{T1} from HVTN 086 and the other receiving \textsf{T4} from HVTN 205. The matching algorithm closely matched on $11$ baseline covariates and minimized the earth mover's distance between the estimated propensity score distributions of two groups. Table \ref{table: cov_balance} summarizes the covariate balance before and after matching; after matching, two groups are more balanced in baseline characteristics, with the standardized mean differences of most variables below $0.20$, or one-fifth of one pooled standard deviation, which is typically considered good covariate balance \citep{silber2001matching}. Figure \ref{fig: case study across arm Gamma = 1}(a) plots the response magnitude among matched study participants. Both regimens elicited strong binding antibody response to gp41 among participants.
 
\begin{table}[ht]
\centering
 {\begin{tabular}{lccccc} 
        \multirow{3}{*}{Covariates} & \multicolumn{3}{c}{\textbf{Before matching}}& \multicolumn{2}{c}{\textbf{After matching}}\\
          & \text{HVTN 086}& \text{HVTN 205}& SMD & \text{HVTN 205}& SMD \\
          & \text{T1 (n = 33)}& \text{T4 (n = 60)}&  & \text{T4 (n = 33)}& \\
  \hline
Age (years) &  23.6 & 26.5 & -0.37 & 23.9 & -0.04 \\
  Sex assigned at birth (female or male) &  0.45 & 0.67 & -0.31 & 0.58 & -0.17  \\ 
  BMI (kg/$m^2$) & 23.4 & 25.0 & -0.24 & 24.0 & -0.09  \\ 
  Systolic blood pressure (mm Hg) & 119.0 & 114.6 & 0.33 & 116.7 & 0.17  \\ 
  Diastolic blood pressure (mm Hg) & 74.3 & 71.2 & 0.29 & 71.9 & 0.22 \\ 
  Hematocrit (\%) & 42.5 & 42.6 & -0.03 & 42.7 & -0.04  \\ 
  Hemoglobin (/$\mu$L) & 14.1 & 14.5 & -0.19 & 14.4 & -0.13 \\ 
  Lymphocytes (/$\mu$L) & 1944 & 1986 & -0.05 & 1924 & 0.03\\ 
  Mean corpuscular volume (fL/red cell) & 88.2 & 89.2 & -0.14 & 89.1 & -0.13 \\ 
  Neutrophils (/$\mu$L) & 3501 & 3569 & -0.04 & 3411 & 0.05 \\ 
  Platelets (/nL) & 270.0 & 244.9 & 0.28 & 247.9 & 0.25 \\ 
\end{tabular}}
\caption{Covariate balance before and after matching. SMD = standardized mean difference.}
\label{table: cov_balance}
\end{table}

We first conducted inference under a randomization assumption: two study participants in each matched pair have the same probability of receiving HVTN 086 \textsf{T1} or HVTN 205 \textsf{T4}; in other words, the matched observational study succeeded in embedding data into a finely stratified randomized experiment \citep{rosenbaum2002observational,chen2023testing}.  We conducted randomization inference for ITE quantiles of HVTN 086 \textsf{T1} versus HVTN 205 \textsf{T4} using \citeauthor{CL}'s method \citeyearpar{CL} reviewed in Section \ref{subsec: stratified experiment} with the stratified Wilcoxon rank sum statistic \citep{10.1093/biomet/asad030}.
Figure \ref{fig: case study across arm Gamma = 1}(b) shows the simultaneous one-sided confidence intervals for selected ITEs $\tau_{(k)}$'s. According to Figure \ref{fig: case study across arm Gamma = 1}(b), the $95\%$ lower confidence limit for ITE at rank
$51$ barely exceeds $0$. This implies that, at $95\%$ confidence level, at least $24\%$, or equivalently $16$ out of $66$, study participants benefited more from HVTN 086 \textsf{T1} compared to HVTN 205 \textsf{T4} in this matched study. We further complemented the analysis by making randomization-based inference for SATE in a matched-pair design \citep{imai2008variance}. The $95\%$ lower confidence limit of SATE is $0.086$ (in the
$\log_{10}$-scale) and quite sensitive due to a few outliers in the HVTN 205 \textsf{T4} arm. The exact inference for the proportion of study participants who benefited more from HVTN 086 \textsf{T1} versus HVTN 205 \textsf{T4} helps complement a standard analysis of the sample average treatment effect.


\begin{figure}%
    \centering
    \subfloat[\centering Distributions of observed immune marker measurements]
    {{\includegraphics[width=0.45\textwidth]{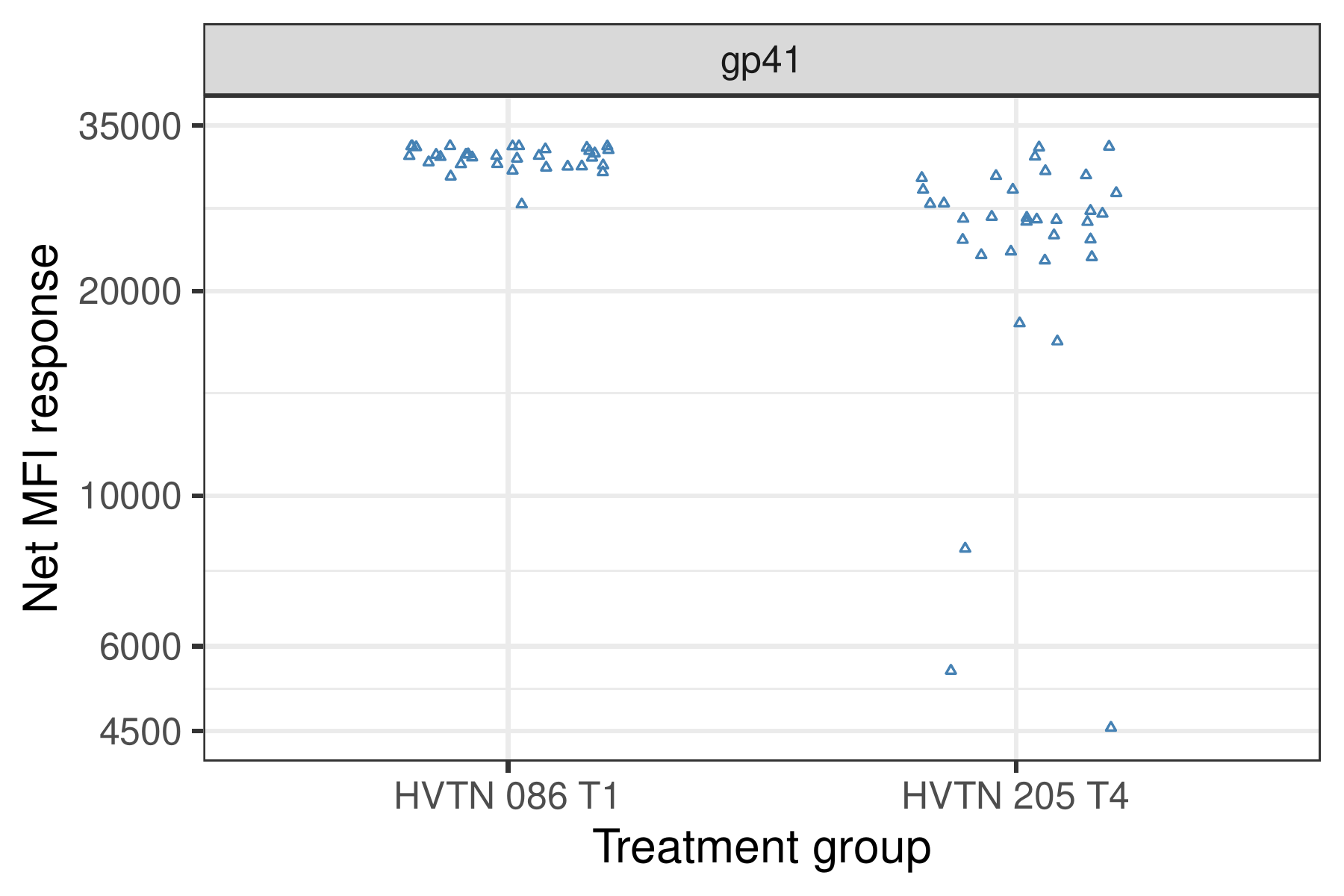} }}%
    \qquad
    \subfloat[\centering Randomization inference for ITE quantiles]{{\includegraphics[width=0.45\textwidth]{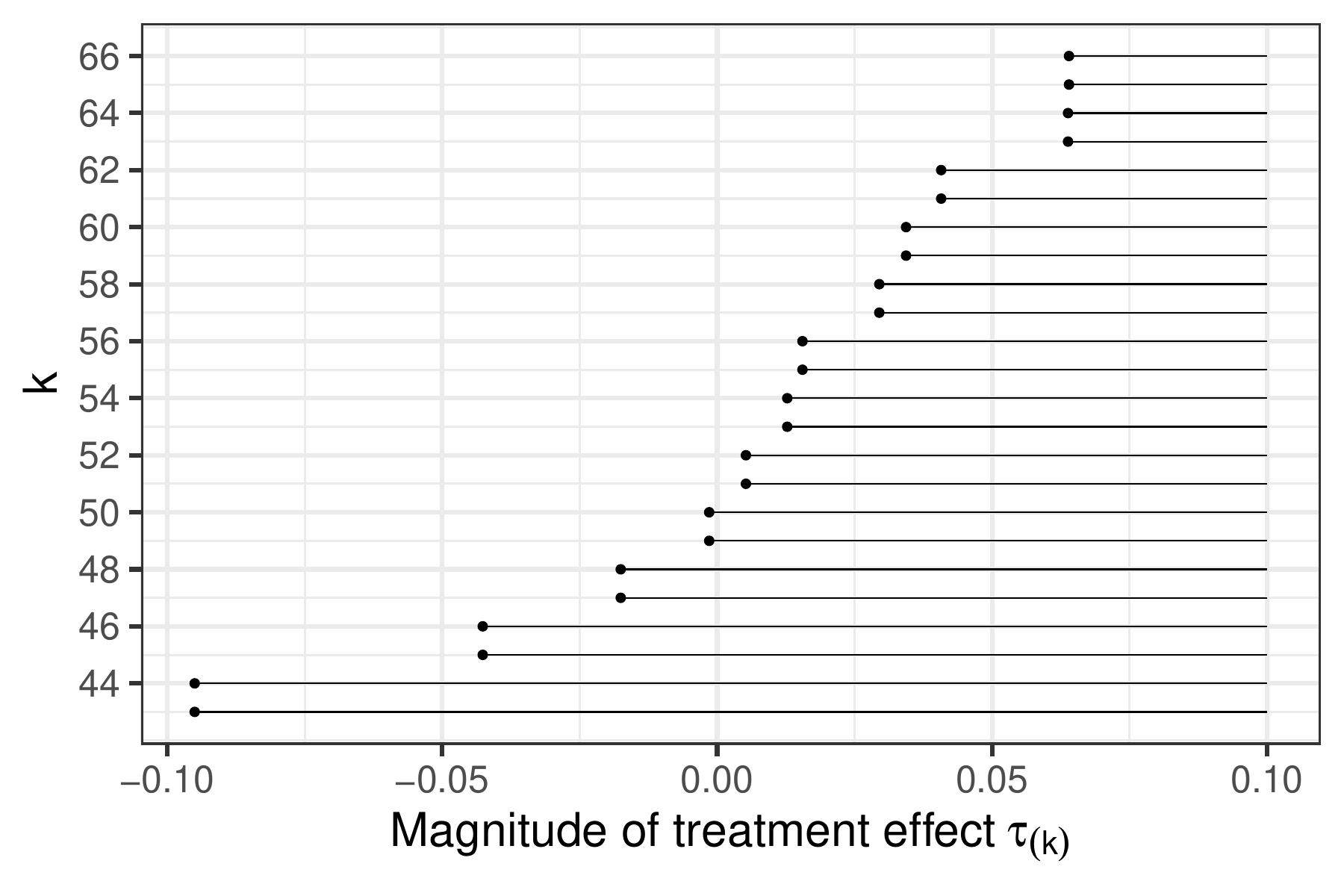} }}%
    \caption{(a) Observed serum IgG BAMA responses to gp41  among study participants who received regimen T1 in HVTN 086 and those who received regimen T4 in HVTN 205.
(b) $95\%$ simultaneous one-sided confidence intervals for ITE quantiles of HVTN 086 T1 versus HVTN 205 T4 using the stratified Wilcoxon rank sum statistic, assuming that there is no hidden confounding.
    }%
    \label{fig: case study across arm Gamma = 1}
\end{figure}

\subsection{A sensitivity analysis assessing deviation from randomization}
\label{subsec: case study SA}
As is true for all non-randomized studies, the head-to-head comparison between HVTN 086 \textsf{T1} and HVTN 205 \textsf{T4} in Section \ref{subsec: case study across trial comparison Gamma = 1} could suffer from unmeasured confounding bias. For instance, this would be the case if HVTN 205 enrolled healthier and more immunopotent participants compared to HVTN 086; hence, any treatment effect comparing HVTN 086 \textsf{T1} to HVTN 205 \textsf{T4} could be attributed to this hidden bias \citep{rosenbaum2002observational}. We next conduct a sensitivity analysis that investigates how a deviation from the randomization assumption could impact the inferred ITE quantiles in Figure \ref{fig: case study across arm Gamma = 1}(b). 

Figure \ref{fig: SA} shows the sensitivity analysis results derived from \citeauthor{CL}'s method \citeyearpar{CL} using the stratified Wilcoxon rank sum statistic. Figure \ref{fig: SA} plots the $95\%$ confidence intervals for ITE quantiles under Rosenbaum's sensitivity analysis model indexed by $\Gamma$; see Equation \eqref{eq: Gamma model} \citep{rosenbaum2002observational}.
From Figure \ref{fig: SA}, when the bias in the treatment assignment is as large as $\Gamma = 1.5$, the ITE at rank $57$, i.e., approximately $86\%$ quantile, remains positive at significance level $0.05$, which implies that HVTN 086 \textsf{T1} would still induce a higher MFI response than HVTN 205 \textsf{T4} for at least $15\%$ of study participants in the study under a moderate bias of magnitude $1.5$. 
According to the method described in \citet{rosenbaum2009amplification}, an unobserved covariate associated with at least a $2.5$--fold increase in the odds of selecting in the study arm \textsf{T1/C1} (HVTN 086) as opposed to \textsf{T4/C4} (HVTN 205) \emph{and} a $2.75$--fold increase in the odds of a positive matched-pair difference in MFI values is needed in order to explain away the top $15\%$ ITE. 
Such a confounding factor appears unlikely after we have controlled for observed covariates using matching. Therefore, we may conclude that HVTN 086 \textsf{T1} induces higher MFI response compared to HVTN 205 \textsf{T4} for at least $15\%$ of the cohort even in the presence of a moderately large selection. When $\Gamma$ increases to $3.3$, the resulting $95\%$ confidence intervals no longer cover zero for any participant, implying that we do not have evidence that any participant would benefit more from HVTN 086 \textsf{T1} versus HVTN 205 \textsf{T4} if the trial selection bias is as large as $\Gamma = 3.3$.

\begin{figure}
    \centering
   \subfloat[$\Gamma = 1.2$]{
	 \includegraphics[width=0.47\textwidth]{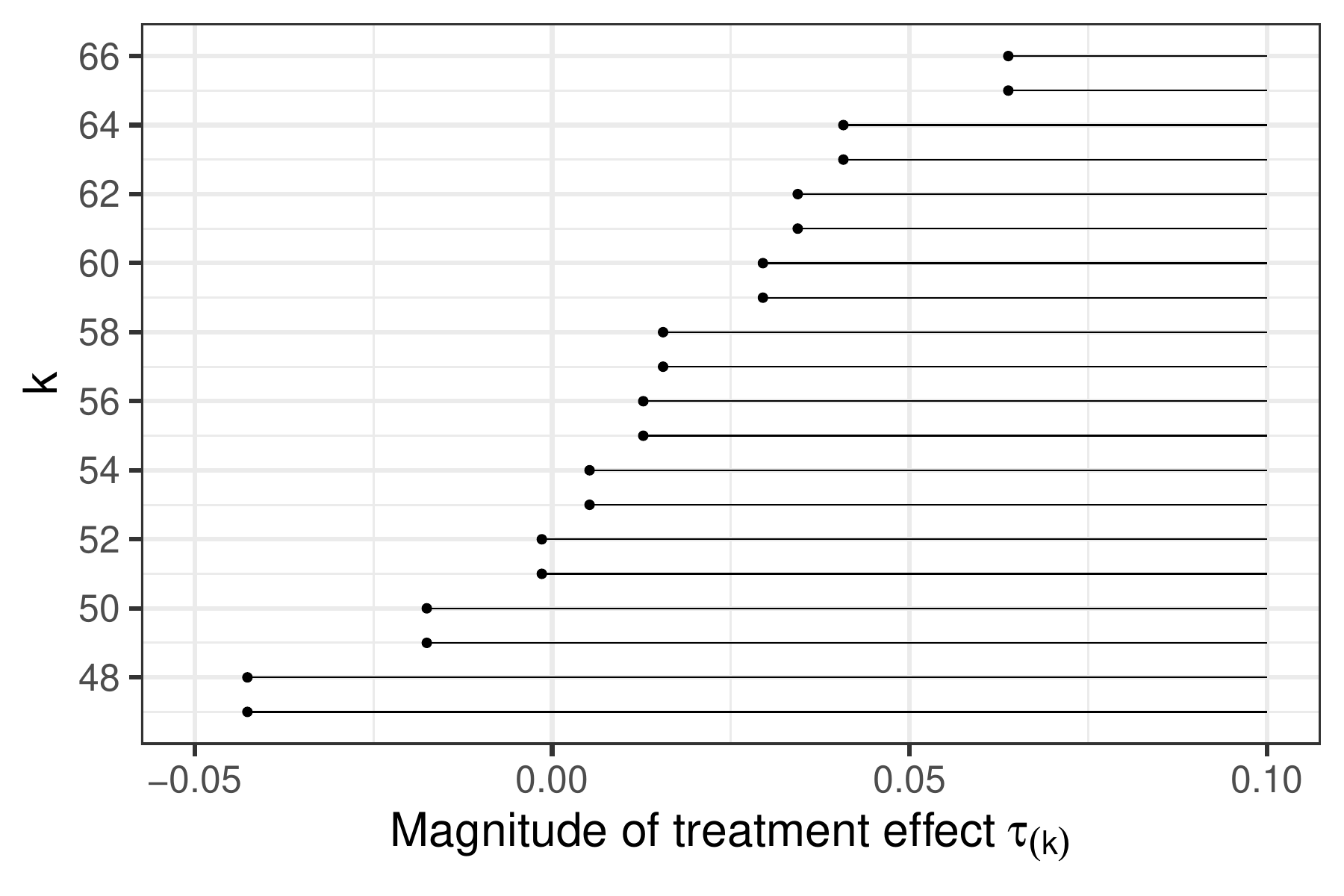}} \hfill
  \subfloat[$\Gamma = 1.5$]{
	 \includegraphics[width=0.47\textwidth]{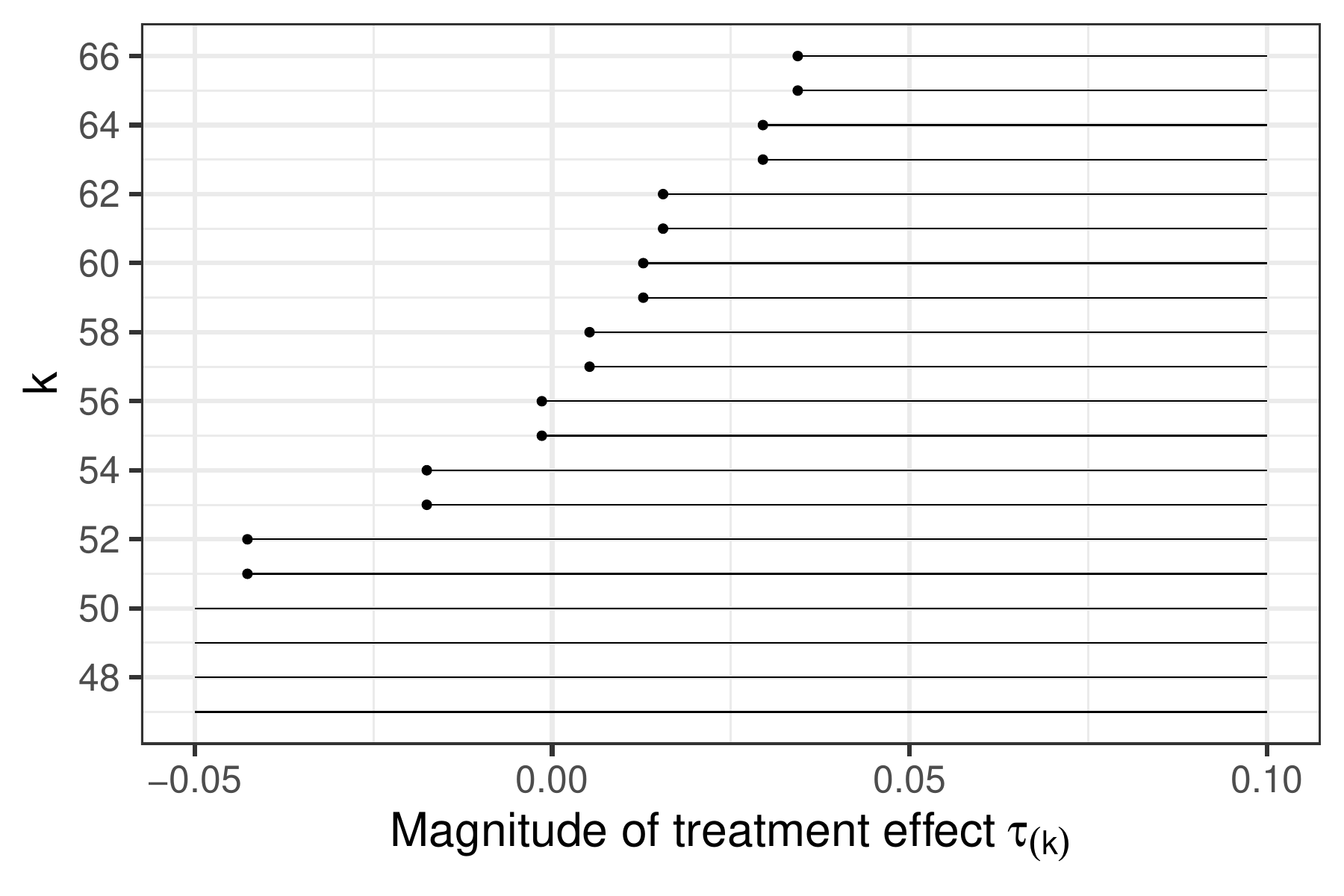}} \\
  \subfloat[$\Gamma = 2.5$]{
	 \includegraphics[width=0.47\textwidth]{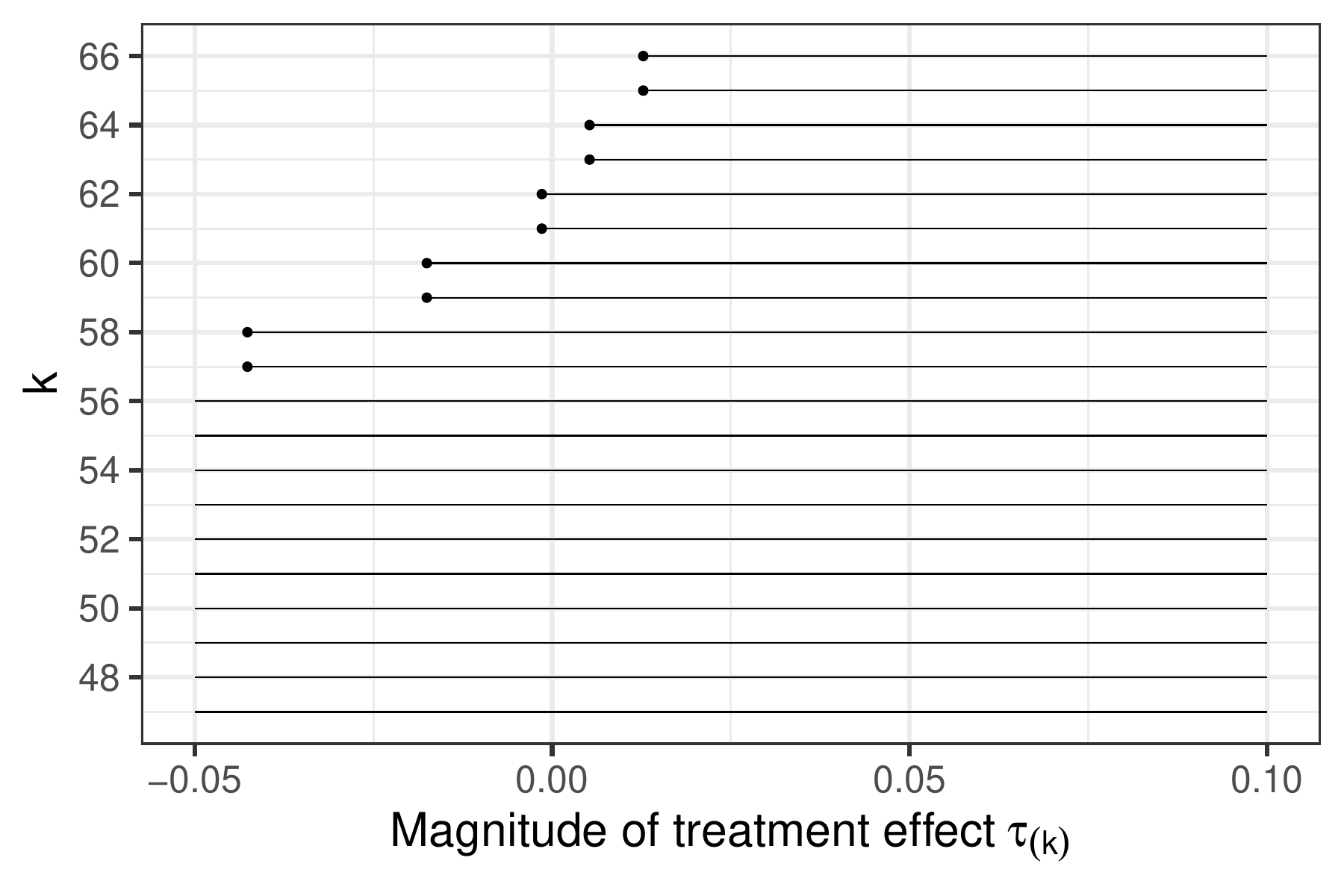}} \hfill
  \subfloat[$\Gamma = 3.3$]{
	 \includegraphics[width=0.47\textwidth]{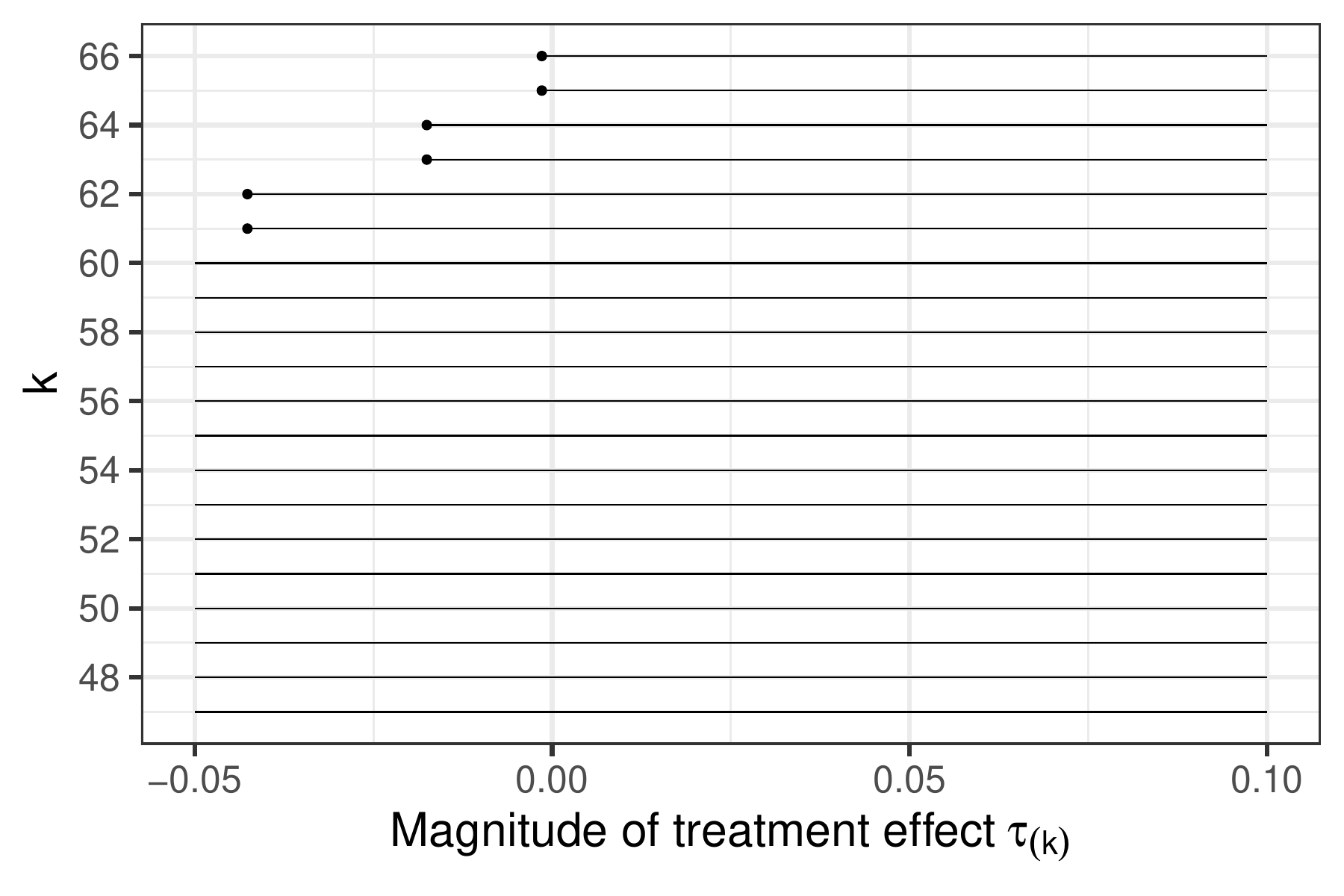}}
  \caption[ The average and standard deviation of critical parameters ]
        {\small  
        The $95\%$ simultaneous one-sided confidence intervals for ITE quantiles under various sensitivity models indexed by different $\Gamma$ using the enhanced method in \cite{CL}.} 
        \label{fig: SA}
\end{figure}

\section{Discussion}
\label{sec: discussion}
Many recent research has centered around moving beyond Fisher's sharp null hypothesis and exploring aspects of the collection of individual treatment effects other than the sample mean. In this article, we provide a systematic review of relevant methods and closely examine the usefulness of different methods under a wide range of scenarios. We found that randomization inference that tests quantile treatment effects could be a useful complement to the SATE, especially when the scientific interest lies in uncovering and quantifying the heterogeneous treatment effects. In addition, these methods hold promise in helping relevant stakeholders advance an experimental therapy that has a meaningfully large treatment effect on possibly a fraction of study participants, as opposed to a competing therapy that has a similar SATE but nevertheless does not show any treatment effect at a magnitude of practical relevance. Another interesting and practically relevant finding is that, perhaps contrary to expectation, when constructing a simultaneous confidence region jointly for many quantiles, or even every quantile, the cost of multiple hypothesis testing could be minimal.

Overall, our assessment is that the suite of randomization-based inferential methods, testing Fisher's sharp null of no effect whatsoever, Neyman's weak null of no SATE, and various quantiles of ITEs discussed in the current article, should always have a place in empirical researchers' toolbox when analyzing data derived from clinical trials because these methods are always reliable and could potentially yield informative results. 

In many senses, quantiles of ITEs are the most fine-grained estimands. This important line of research could be furthered in at least two directions. First, choosing an appropriate test statistic is a key component. As demonstrated in simulation studies, the power of the Stephenson rank sum statistic depends critically on the choice of $s$ and an optimal choice of the test statistic could depend largely on the data-generating process. One possibility is to develop a data-driven, adaptive approach that produces an optimal or near-optimal choice of the test statistic (or the tuning parameter in the test statistic) or combines several test statistics.
Second, we have reviewed and considered methods that (i)  assume a constant $Y(0)$; (ii) do not place any restriction on any aspect of the potential outcomes; (iii) place a uniform upper or lower bound on  potential outcomes. There are still plenty of other possibilities to leverage auxiliary information from historical data and improve the power of statistical inference.

\section*{Acknowledgements}
Research reported in this publication was partially supported by the National Science Foundation (award DMS-2238128 to Xinran Li). 

\section*{Data Availability Statement}
The data that support the findings of this study are available from the public-facing HIV Vaccine Trials Network (HVTN) website (\url{https://atlas.scharp.org/}).

\bibliographystyle{apalike}
\bibliography{paper-ref}

\begin{thebibliography}{}

\bibitem[Berger and Boos, 1994]{Berger1994}
Berger, R.~L. and Boos, D.~D. (1994).
\newblock P values maximized over a confidence set for the nuisance parameter.
\newblock {\em Journal of the American Statistical Association}, 89:1012--1016.

\bibitem[Caughey et~al., 2023]{caughey2021randomization}
Caughey, D., Dafoe, A., Li, X., and Miratrix, L. (2023).
\newblock Randomization inference beyond the sharp null: Bounded null hypotheses and quantiles of individual treatment effects.
\newblock {\em Journal of the Royal Statistical Society, Series B (Statistical Methodology)}, in press.

\bibitem[Chen et~al., 2023]{chen2023testing}
Chen, K., Heng, S., Long, Q., and Zhang, B. (2023).
\newblock Testing biased randomization assumptions and quantifying imperfect matching and residual confounding in matched observational studies.
\newblock {\em Journal of Computational and Graphical Statistics}, 32(2):528--538.

\bibitem[Chen and Li, 2023]{CL}
Chen, Z. and Li, X. (2023).
\newblock Distributions of individual treatment effects in sampling-based randomized experiments.
\newblock Technical Report.

\bibitem[Cohen and Fogarty, 2022]{cohen2022gaussian}
Cohen, P.~L. and Fogarty, C.~B. (2022).
\newblock Gaussian prepivoting for finite population causal inference.
\newblock {\em Journal of the Royal Statistical Society Series B: Statistical Methodology}, 84(2):295--320.

\bibitem[Dahabreh et~al., 2019]{dahabreh2019generalizing}
Dahabreh, I.~J., Robertson, S.~E., Tchetgen, E.~J., Stuart, E.~A., and Hern{\'a}n, M.~A. (2019).
\newblock Generalizing causal inferences from individuals in randomized trials to all trial-eligible individuals.
\newblock {\em Biometrics}, 75(2):685--694.

\bibitem[Ding and Dasgupta, 2018]{ding2018randomization}
Ding, P. and Dasgupta, T. (2018).
\newblock A randomization-based perspective on analysis of variance: a test statistic robust to treatment effect heterogeneity.
\newblock {\em Biometrika}, 105(1):45--56.

\bibitem[Ditse et~al., 2020]{ditse2020effect}
Ditse, Z., Mkhize, N.~N., Yin, M., Keefer, M., Montefiori, D.~C., Tomaras, G.~D., Churchyard, G., Mayer, K.~H., Karuna, S., Morgan, C., et~al. (2020).
\newblock Effect of hiv envelope vaccination on the subsequent antibody response to hiv infection.
\newblock {\em Msphere}, 5(1):e00738--19.

\bibitem[Fisher, 1935]{Fisher1935design}
Fisher, R.~A. (1935).
\newblock {\em The Design of Experiments.}
\newblock Oliver and Boyd. London and Edinburgh.

\bibitem[Fogarty, 2020]{fogarty2020studentized}
Fogarty, C.~B. (2020).
\newblock Studentized sensitivity analysis for the sample average treatment effect in paired observational studies.
\newblock {\em Journal of the American Statistical Association}, 115(531):1518--1530.

\bibitem[Goepfert et~al., 2014]{goepfert2014specificity}
Goepfert, P.~A., Elizaga, M.~L., Seaton, K., Tomaras, G.~D., Montefiori, D.~C., Sato, A., Hural, J., DeRosa, S.~C., Kalams, S.~A., McElrath, M.~J., et~al. (2014).
\newblock Specificity and 6-month durability of immune responses induced by dna and recombinant modified vaccinia ankara vaccines expressing hiv-1 virus-like particles.
\newblock {\em The Journal of infectious diseases}, 210(1):99--110.

\bibitem[Huang et~al., 2022]{huang2022baseline}
Huang, Y., Zhang, Y., Seaton, K.~E., De~Rosa, S., Heptinstall, J., Carpp, L.~N., Randhawa, A.~K., McKinnon, L.~R., McLaren, P., Viegas, E., et~al. (2022).
\newblock Baseline host determinants of robust human hiv-1 vaccine-induced immune responses: A meta-analysis of 26 vaccine regimens.
\newblock {\em Ebiomedicine}, 84:104271.

\bibitem[Imai, 2008]{imai2008variance}
Imai, K. (2008).
\newblock Variance identification and efficiency analysis in randomized experiments under the matched-pair design.
\newblock {\em Statistics in medicine}, 27(24):4857--4873.

\bibitem[Imbens and Rubin, 2015]{imbens2015causal}
Imbens, G.~W. and Rubin, D.~B. (2015).
\newblock {\em Causal Inference in Statistics, Social, and Biomedical Sciences}.
\newblock Cambridge University Press.

\bibitem[Li et~al., 2018]{li2018asymptotic}
Li, X., Ding, P., and Rubin, D.~B. (2018).
\newblock Asymptotic theory of rerandomization in treatment--control experiments.
\newblock {\em Proceedings of the National Academy of Sciences}, 115(37):9157--9162.

\bibitem[Lipkovich et~al., 2023]{lipkovich2023overview}
Lipkovich, I., Svensson, D., Ratitch, B., and Dmitrienko, A. (2023).
\newblock Overview of modern approaches for identifying and evaluating heterogeneous treatment effects from clinical data.
\newblock {\em Clinical Trials}, page 17407745231174544.

\bibitem[Neyman and Iwaszkiewicz, 1935]{neyman1935statistical}
Neyman, J. and Iwaszkiewicz, K. (1935).
\newblock Statistical problems in agricultural experimentation.
\newblock {\em Supplement to the Journal of the Royal Statistical Society}, 2(2):107--180.

\bibitem[Neyman, 1923]{neyman1923application}
Neyman, J.~S. (1923).
\newblock {On the application of probability theory to agricultural experiments. Essay on principles. Section 9.}
\newblock {\em Annals of Agricultural Sciences}, 10:1--51.

\bibitem[Rosenbaum, 2017]{rosenbaum2017observation}
Rosenbaum, P. (2017).
\newblock {\em Observation and Experiment: An Introduction to Causal Inference}.
\newblock Harvard University Press.

\bibitem[Rosenbaum, 1987]{rosenbaum1987sensitivity}
Rosenbaum, P.~R. (1987).
\newblock Sensitivity analysis for certain permutation inferences in matched observational studies.
\newblock {\em Biometrika}, 74(1):13--26.

\bibitem[Rosenbaum, 2002]{rosenbaum2002observational}
Rosenbaum, P.~R. (2002).
\newblock {\em Observational Studies}.
\newblock Springer.

\bibitem[Rosenbaum, 2010]{rosenbaum2010design}
Rosenbaum, P.~R. (2010).
\newblock {\em Design of observational studies}, volume~10.
\newblock Springer.

\bibitem[Rosenbaum and Silber, 2009]{rosenbaum2009amplification}
Rosenbaum, P.~R. and Silber, J.~H. (2009).
\newblock Amplification of sensitivity analysis in matched observational studies.
\newblock {\em Journal of the American Statistical Association}, 104(488):1398--1405.

\bibitem[Rubin, 1974]{rubin1974estimating}
Rubin, D.~B. (1974).
\newblock Estimating causal effects of treatments in randomized and nonrandomized studies.
\newblock {\em Journal of Educational Psychology}, 66(5):688.

\bibitem[Rubin, 2005]{rubin2005causal}
Rubin, D.~B. (2005).
\newblock Causal inference using potential outcomes: Design, modeling, decisions.
\newblock {\em Journal of the American Statistical Association}, 100(469):322--331.

\bibitem[Sedransk and Meyer, 1978]{sedransk1978confidence}
Sedransk, J. and Meyer, J. (1978).
\newblock Confidence intervals for the quantiles of a finite population: simple random and stratified simple random sampling.
\newblock {\em Journal of the Royal Statistical Society: Series B (Methodological)}, 40(2):239--252.

\bibitem[Silber et~al., 2001]{silber2001matching}
Silber, J.~H., Rosenbaum, P.~R., Trudeau, M.~E., Even-Shoshan, O., Chen, W., Zhang, X., and Mosher, R.~E. (2001).
\newblock Multivariate matching and bias reduction in the surgical outcomes study.
\newblock {\em Medical Care}, 39(10):1048--1064.

\bibitem[Stuart et~al., 2011]{stuart2011use}
Stuart, E.~A., Cole, S.~R., Bradshaw, C.~P., and Leaf, P.~J. (2011).
\newblock The use of propensity scores to assess the generalizability of results from randomized trials.
\newblock {\em Journal of the Royal Statistical Society: Series A (Statistics in Society)}, 174(2):369--386.

\bibitem[Su and Li, 2023]{10.1093/biomet/asad030}
Su, Y. and Li, X. (2023).
\newblock {Treatment Effect Quantiles in Stratified Randomized Experiments and Matched Observational Studies}.
\newblock {\em Biometrika}.
\newblock asad030.

\bibitem[Wang, 2015]{Wang:2015}
Wang, W. (2015).
\newblock Exact optimal confidence intervals for hypergeometric parameters.
\newblock {\em Journal of the American Statistical Association}, in press.

\bibitem[Wu and Ding, 2021]{wu2021randomization}
Wu, J. and Ding, P. (2021).
\newblock Randomization tests for weak null hypotheses in randomized experiments.
\newblock {\em Journal of the American Statistical Association}, 116(536):1898--1913.

\bibitem[Zhang et~al., 2023]{zhang2023matching}
Zhang, B., Small, D.~S., Lasater, K.~B., McHugh, M., Silber, J.~H., and Rosenbaum, P.~R. (2023).
\newblock Matching one sample according to two criteria in observational studies.
\newblock {\em Journal of the American Statistical Association}, 118(542):1140--1151.

\end{thebibliography}

\end{document}